\newtheorem{theo}{Theorem}
\newtheorem{assumption}{Assumption}
\newtheorem{lemma}{Lemma}
\title{\LARGE \bf
MM-LMPC: Multi-Modal Learning Model Predictive Control via Bandit-Based Mode Selection
}
\author{Wataru Hashimoto$^{1}$ and Kazumune Hashimoto$^{2}$% <-this % stops a space
\thanks{{Wataru Hashimoto and Kazumune Hashimoto are with the Graduate School of Engineering, The University of Osaka, Suita, Japan (e-mail: hashimoto@is.eei.eng.osaka-u.ac.jp, hashimoto@eei.eng.osaka-u.ac.jp). The corresponding author is Wataru Hashimoto.
This work is supported by JST CREST JPMJCR201, JST ACT-X JPMJAX23CK, and JSPS KAKENHI Grant 21K14184, and 22KK0155.
}}% <-this % stops a space
%\thanks{Manuscript received April 19, 2021; revised August 16, 2021.}
}
\begin{document}

\maketitle
\thispagestyle{empty}
\pagestyle{empty}

%%%%%%%%%%%%%%%%%%%%%%%%%%%%%%%%%%%%%%%%%%%%%%%%%%%%%%%%%%%%%%%%%%%%%%%%%%%%%%%%
\begin{abstract}
Learning Model Predictive Control (LMPC) improves performance on iterative tasks by leveraging data from previous executions. At each iteration, LMPC constructs a sampled safe set from past trajectories and uses it as a terminal constraint, with a terminal cost given by the corresponding cost-to-go.
While effective, LMPC heavily depends on the initial trajectories: states with high cost-to-go are rarely selected as terminal candidates in later iterations, leaving parts of the state space unexplored and potentially missing better solutions. For example, in a reach-avoid task with two possible routes, LMPC may keep refining the initially shorter path while neglecting the alternative path that could lead to a globally better solution.
To overcome this limitation, we propose Multi-Modal LMPC (MM-LMPC), which clusters past trajectories into modes and maintains mode-specific terminal sets and value functions. A bandit-based meta-controller with a Lower Confidence Bound (LCB) policy balances exploration and exploitation across modes, enabling systematic refinement of all modes. This allows MM-LMPC to escape high-cost local optima and discover globally superior solutions.
We establish recursive feasibility, closed-loop stability, asymptotic convergence to the best mode, and a logarithmic regret bound. Simulations on obstacle-avoidance tasks validate the performance improvements of the proposed method.
\end{abstract}

%%%%%%%%%%%%%%%%%%%%%%%%%%%%%%%%%%%%%%%%%%%%%%%%%%%%%%%%%%%%%%%%%%%%%%%%%%%%%%%%

\section{INTRODUCTION}
Model Predictive Control (MPC) is a widely used control strategy that determines control inputs by repeatedly solving a finite-horizon optimal control problem at each sampling instant based on a predictive model of the system dynamics \cite{MPC}. Its ability to explicitly handle system constraints and multivariate systems has made MPC a powerful tool in various engineering domains, from process control \cite{MPCengineering} to autonomous systems such as robotics and self-driving vehicles \cite{MPCAGV1,MPCLocomotion}.
 However, since MPC determines the control input by solving an optimal control problem over a relatively short prediction horizon, its decisions may deviate from the true infinite-horizon optimal solution. This can lead to high-cost suboptimal performance, particularly in scenarios where long-term effects and delayed consequences play a significant role in achieving the control objectives.

To address this problem, Ugo Rosolia and Francesco Borrelli 
 proposed Learning Model Predictive Control (LMPC) \cite{iterative1}, which repeatedly applies control with MPC to iterative tasks while leveraging state and input trajectories from previous iterations to improve control performance. In their approach, terminal constraints and terminal cost functions are progressively updated using data from successful past iterations, thereby ensuring recursive feasibility of the optimization problem, stability of the closed-loop system, and non-increasing iteration costs under suitable assumptions.
 However, a key limitation of the LMPC framework lies in its strong dependence on the set of trajectories provided in early iterations. Since the terminal constraint and cost are constructed from states visited in previous successful trials, and the terminal constraint in LMPC imposes the system to match with one of the states that has been visited in a previous iteration, the controller can only explore solutions that remain within the regions near these trajectories. For instance, in a navigation task with obstacles, if the initial feasible trajectory passes to the left side of an obstacle, the controller will generally converge to the best path within that left-side corridor even if the globally optimal route lies to the right. Moreover, even when initial trajectories on both sides are provided, if an initial trajectory regarding the right-side path is longer than that of the left-side, it never contributes to the MPC solution due to the high cost-to-go associated with a state in that trajectory.  %This dependence on the initial data can lead to convergence to high-cost suboptimal solutions in problems where multiple qualitatively different solution classes exist.

To overcome this limitation, we propose \emph{Multi-Modal Learning Model Predictive Control (MM-LMPC)}, a framework that systematically explores and exploits multiple solution modes. The approach begins by clustering past trajectories into distinct modes and assigning each mode its own LMPC controller with a dedicated terminal set and value function. A high-level meta-controller, formulated as a multi-armed bandit problem, selects which mode to execute at each iteration. This design balances the refinement of well-performing modes with the exploration of under-explored ones, enabling the controller to escape high-cost local optima and discover globally superior solutions.
Our theoretical analysis shows that MM-LMPC preserves recursive feasibility, ensures closed-loop stability, guarantees asymptotic convergence to the best-performing mode, and achieves a logarithmic regret bound in the number of iterations. Simulation results on a minimum-time reach-avoid problem for the Dubins car demonstrate that the proposed method outperforms the standard LMPC algorithm.

\textbf{Related works on iterative learning MPC:}
The idea of leveraging past execution data to improve control performance in repetitive tasks has long been central to iterative learning control (ILC) \cite{ILC1,ILC2}. More recently, substantial effort has focused on integrating ILC with MPC, enabling explicit state-constraint handling and closed-loop stability guarantees \cite{ILMPC1,ILMPC2,ILMPC3,ILMPC4,ILMPC5,ILMPC6}. An early attempt \cite{ILMPC1} combined ILC with Generalized Predictive Control (GPC), demonstrating significant performance gains, followed by extensions to general nonlinear systems with convergence guarantees to a prescribed reference trajectory \cite{ILMPC2,ILMPC3,ILMPC4,ILMPC5,ILMPC6}. A key limitation of these methods is their reliance on a fixed reference trajectory, limiting practical applicability.

To address this, reference-free iterative learning MPC frameworks have been proposed. These methods iteratively refine the terminal set and cost using trajectory data from previous iterations, approximating the infinite-horizon solution via repeated finite-horizon MPC problems, provided at least one feasible (not necessarily optimal) trajectory is available \cite{iterative1,iterative2}. For linear systems, convergence to the optimal solution is guaranteed, while for nonlinear systems monotonic performance improvement is ensured \cite{iterative1}. This approach has since been generalized to uncertain linear systems \cite{iterative3}, probabilistic nonlinear systems \cite{iterative4}, unknown dynamics \cite{self1,self2,RLMPC}, cooperative multi-agent settings \cite{iterative5}, and certificate-function-based formulations \cite{NNLMPC}, with successful demonstrations in domains such as autonomous racing \cite{AV} and robotic surgery \cite{surgical}.

Particularly relevant is task decomposition MPC (TDMPC) \cite{TDMPC1,TDMPC2}, which leverages the subtask structure of LMPC to build safe sets and terminal costs for new tasks by reordering previously solved subtasks. Our work similarly exploits task structure but focuses on mode decompositions within a single task, combined with a bandit-based mode selection strategy. Prior multi-modal LMPC studies mainly addressed modality due to changes in physical dynamics \cite{MMLMPC}, whereas we target intra-task modal diversity.

\section{Problem Formulation}

We consider a discrete-time nonlinear system
\begin{equation}\label{eq:system}
    x_{t+1} = f(x_t, u_t), \quad x_t \in \mathbb{R}^n, \; u_t \in \mathbb{R}^m,
\end{equation}
subject to state and input constraints
\begin{equation}
    x_t \in \mathcal{X}, \quad u_t \in \mathcal{U}.
\end{equation}

The objective of this paper is to design a feedback control law that solves the infinite-horizon optimal control problem:
\begin{equation}
\begin{aligned}
    \min_{\{u_t\}_{t=0}^\infty} \;& \sum_{t=0}^\infty h(x_t, u_t) \\
    \text{s.t.} \;& x_{t+1} = f(x_t, u_t), \\
                  & x_t \in \mathcal{X}, \quad u_t \in \mathcal{U}, \quad \forall t \ge 0, \\
                  & x_0 \in \mathcal{X}.
\end{aligned}
\end{equation}
where the function $h$ is the stage cost function that encodes the performance of the system.
We make the following assumptions on the system and the stage cost function $h$, which are standard in MPC literature.

\begin{assumption}
\label{assum:system}
The system dynamics $f(\cdot, \cdot)$ are continuous. The state and input constraint sets $\mathcal{X}$ and $\mathcal{U}$ are compact. 
\end{assumption}
\begin{assumption}
The stage cost satisfies $h(x,u) > 0$ for all $x \in \mathcal{X}\backslash \{x_F\}, u\in \mathcal{U}$. where the final state $x_F$ is assumed to be a feasible equilibrium for the unforced system (\ref{eq:system}), i.e., $f(x_F,0)=x_F$. Moreover, the function $h$ satisfies $ h(x_F,0)=0$ and
\begin{align}
    h(x,u)\succ 0, \forall x\in \mathcal{X}\backslash\{x_F\}, u\in \mathcal{U}\backslash \{0\}.  
\end{align}
\end{assumption}

We additionally require the existence of at least one successful feasible trajectory, which is standard in LMPC literature. 
\begin{assumption}[Initial Successful Trajectory]
\label{assum:initial_feasible}
At least one feasible trajectory 
$\{x_0, u_0, x_1, u_1, \dots, x_T\}$ exists that satisfies the system dynamics 
and all state and input constraints, and reaches the final equilibrium 
$x_F$.
\end{assumption}

\section{Review of Learning Model Predictive Control (LMPC)}\label{sec:review}

In this section, we briefly review the LMPC framework \cite{iterative1}, which forms a key foundation of our work. We then introduce a limitation of the original LMPC algorithm that motivates our proposed approach.
In LMPC, the task is executed repeatedly over iterations $j = 0, 1, \dots$ using a finite-horizon MPC. At iteration $j$, a feasible closed-loop trajectory
\[
\{x_0^j, u_0^j, x_1^j, u_1^j, \dots, x_{T_j}^j\}
\]
is obtained, where $T_j$ denotes the time to reach the final state $x_F$.
From all successful previous iterations, LMPC constructs the terminal set as
\begin{equation}\label{eq:SS}
    \mathcal{SS}^j = \bigcup_{i \in M^j} \bigcup_{t=0}^{T_i} x_t^i,
\end{equation}
where $M^j$ is the set of indices of iterations that successfully completed the task before iteration $j$.
For each $x \in \mathcal{SS}^j$, LMPC defines the terminal cost as the minimal cost-to-go among previous visits:
\begin{equation}\label{eq:Q}
    Q^j(x) =
    \begin{cases}
        \displaystyle \min_{(i,t) \in \mathcal{F}^j(x)}
        \; \sum_{k=t}^{T_i} h(x_k^i, u_k^i), & \text{if } x \in \mathcal{SS}^j, \\[2ex]
        +\infty, & \text{otherwise},
    \end{cases}
\end{equation}
where
\begin{equation}
\mathcal{F}^j(x) = \{ (i,t) \mid i \in M^j, ; x_t^i = x \}.
\end{equation}
With the above definitions of terminal set and cost, at time $t$ in iteration $j$, LMPC solves the finite-horizon optimal control problem:
\begin{subequations} \label{eq:mpc}
\begin{align}
    \min_{\{u_{k|t}^j\}_{k=t}^{t+N-1}} \ & \sum_{k=t}^{t+N-1} h(x_{k|t}^j, u_{k|t}^j) + Q^{j-1}(x_{t+N|t}^j) \\
    \text{s.t.}\quad x_{k+1|t}^j &= f(x_{k|t}^j, u_{k|t}^j), \\
    x_{k|t}^j &\in \mathcal{X},\ \ u_{k|t}^j \in \mathcal{U}, \\
    x_{t+N|t}^j &\in \mathcal{SS}^{j-1},\ \ x_{t|t}^j = x_t^j,
\end{align}
\end{subequations}
After solving (\ref{eq:mpc}), the optimal input and corresponding state trajectories are obtained as
\(\{u_{k|t}^{j,*}\}_{k=t}^{t+N-1}\) and \(\{x_{k|t}^{j,*}\}_{k=t}^{t+N}\), respectively.  
Then, the first optimal control input \(u_{t|t}^{j,*}\) is applied to the system (\ref{eq:system}) and the next state \(x_{t+1}^j\) is observed. At the next time step, the optimization is solved again from the initial state \(x_{t+1}^j\). This procedure is repeated at each time step, thereby implementing a receding-horizon control.
After each iteration, the terminal components \(\mathcal{SS}^j\) and \(Q^j\) are updated based on the corrected data according to the definition (\ref{eq:SS}) and (\ref{eq:Q}). As discussed in Section III of \cite{iterative1}, LMPC guarantees desirable properties such as recursive feasibility and stability of the closed-loop system, and ensures that the total cost of each iteration does not increase.

However, the original LMPC algorithm tends to focus exploration on regions associated with previously observed low-cost trajectories, which can be problematic in tasks that admit multiple qualitatively distinct solution modes. For example, consider the reach-avoid problem illustrated in Fig.~\ref{fig:example}, where a vehicle must reach a goal region while avoiding an obstacle. Suppose two feasible initial trajectories are provided, one passing above the obstacle and the other below. Even if the globally optimal solution follows the lower path, LMPC may converge to the suboptimal upper path if the lower trajectory is initially longer, since states along that path may never be selected as terminal states in subsequent iterations. Indeed, in the simulation shown in Fig.~\ref{fig:example}, states along the lower path are never selected as terminal candidates, thus preventing the algorithm from exploring this alternative route. This limitation motivates the development of our proposed approach.

\begin{figure}[t]
    \centering
    \includegraphics[width=\linewidth]{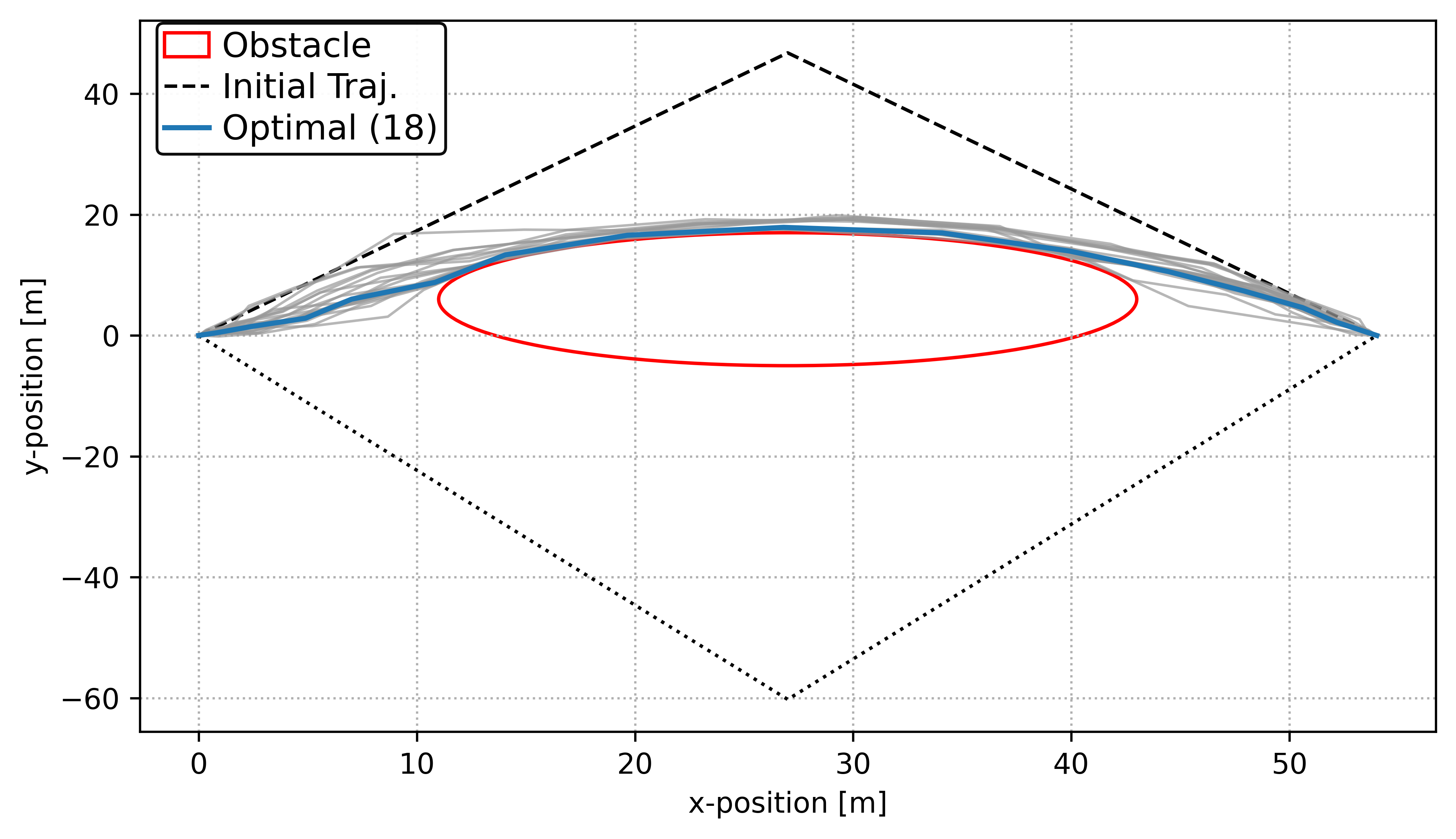}
    \caption{Execution example of standard LMPC: black dashed/dotted are initial seeds, gray are rollouts, bold curve is the final best. Obstacle shown in red.}
    \label{fig:example}
\end{figure}

\section{Proposed Methodology: Multi-Modal Learning Model Predictive Control (MM-LMPC)}
\begin{figure}[t]
    \centering
    \includegraphics[width=\linewidth]{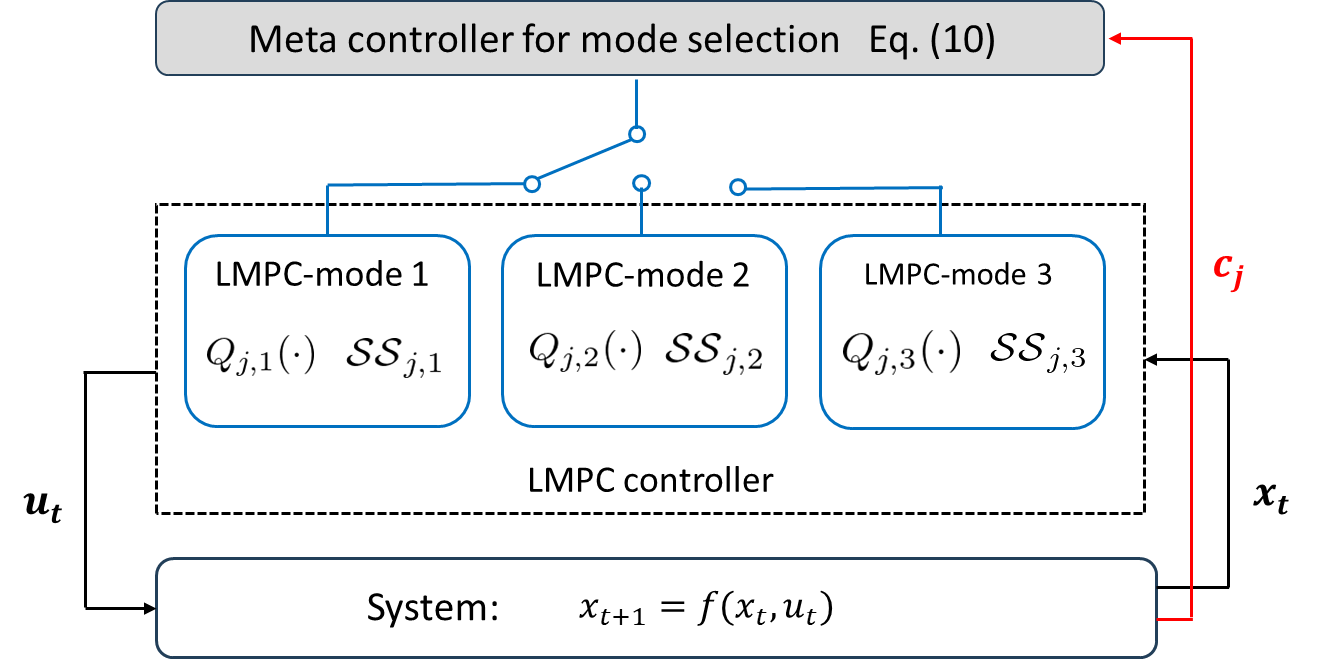}
    \caption{The proposed MM-LMPC architecture.}
    \label{fig:proposed}
\end{figure}

\begin{algorithm}
\DontPrintSemicolon
\caption{Multi-Modal LMPC (MM-LMPC)}
\label{alg:mm-lmpc-final}

\KwIn{$x_0$, initial data $\mathcal{D}_0$, max iterations $J_{max}$, exploration constant $\kappa$}

\textbf{Initialization:}\;
Initialize $n_m\!\leftarrow\!0$, $\mathcal{C}_m^{(0)}\!\leftarrow\!\emptyset$, $\mathcal{SS}_{0,m}\!\leftarrow\!\emptyset$ for all $m$\;
\For{each $(\mathbf{x}^i,\mathbf{u}^i)\in\mathcal{D}_0$}{
    $m_i\!\leftarrow\!\text{Classify}(\mathbf{x}^i)$,\,
    $\mathcal{SS}_{0,m_i}\!\leftarrow\!\mathcal{SS}_{0,m_i}\cup\{x_k^i\}$,\,
    $n_{m_i}\!\leftarrow\!n_{m_i}+1$,\,
    $\mathcal{C}_{m_i}^{(0)}\!\leftarrow\!\mathcal{C}_{m_i}^{(0)}\cup\{J(\mathbf{x}^i,\mathbf{u}^i)\}$\;
}
$N_0 \leftarrow$ number of initialized modes.\;
Construct $Q_{0,m}$ for initialized modes\;

\textbf{Main Loop:}\;
\For{$j=1$ \KwTo $J_{max}$}{
    \textit{1. Mode Selection:}\;
    $j_{total}\!\leftarrow\!\sum_m n_m$,\,
    $\hat{J}_{m}^{(j-1)}\!\leftarrow\!\min(\mathcal{C}_m^{(j-1)})$,\,
    $m_j\!\leftarrow\!\arg\min_m \bigl(\hat{J}_{m}^{(j-1)}-\kappa\sqrt{\log j_{total}/\max\{1,n_m\}}\bigr)$\;
    
    \textit{2. Execute Iteration:}\;
    Generate trajectory $(\mathbf{x}^j,\mathbf{u}^j)$ by solving (\ref{eq:MPC2}) for mode $m_j$\;
    
    \textit{3. Classify and Update:}\;
    $m_{new}\!\leftarrow\!\text{Classify}(\mathbf{x}^j)$,\,
    $N_j\!\leftarrow\!\max(N_{j-1},m_{new})$,\,
    $\mathcal{SS}_{j,m_{new}}\!\leftarrow\!\mathcal{SS}_{j-1,m_{new}}\cup\{x_k^j\}$,\,
    $Q_{j,m_{new}}\!\leftarrow\!\text{Construct from }\mathcal{SS}_{j,m_{new}}$,\,
    $n_{m_{new}}\!\leftarrow\!n_{m_{new}}+1$,\,
    $\mathcal{C}_{m_{new}}^{(j)}\!\leftarrow\!\mathcal{C}_{m_{new}}^{(j-1)}\cup\{J(\mathbf{x}^j,\mathbf{u}^j)\}$\;
    \For{$m\neq m_{new}$}{$\mathcal{SS}_{j,m}\!\leftarrow\!\mathcal{SS}_{j-1,m}$\;}
}
\end{algorithm}

To address the problem of standard LMPC discussed in the previous section, we propose a \emph{Multi-Modal LMPC} (MM-LMPC) architecture that maintains and coordinates multiple LMPC controllers, each specialized for a distinct motion pattern. In the following discussion, we denote the trajectory obtained at each iteration $j$ as $\mathbf{x}^j=\{x_0, x_1, \dots, x_{T^j}\}$ and $\mathbf{u}^j=\{x_0, x_1, \dots, x_{T_j}\}$, respectively, for notational simplicity. Moreover, we denote by $J(\mathbf{x}^j,\mathbf{u}^j)$ the total cost of a closed-loop trajectory of $j$-th iteration. 

The proposed method consists of three components: clustering of the obtained trajectories into modes, control execution with mode-specific LMPC, and a meta-controller for mode selection. 
%Before presenting the overall control framework, we introduce the notation used throughout this section. For each mode $m$, $\mathcal{SS}_{j,m}$ denotes the sampled safe set at iteration $j$, $\mathcal{C}_m$ the set of trajectory costs observed for mode $m$, and $n_m$ the number of times mode $m$ has been executed. We write $J(\mathbf{x},\mathbf{u})$ for the total cost of a closed-loop trajectory $(\mathbf{x},\mathbf{u})$, and $\hat{J}_{m,j}=\min(\mathcal{C}_m)$ for the best cost observed in mode $m$ up to iteration $j$. The total number of iterations is denoted $J_{\max}$, and $j$ indexes the current iteration.
The overall MM-LMPC control architecture and algorithm are illustrated in Fig.~\ref{fig:proposed} and Algorithm~\ref{alg:mm-lmpc-final}. 
These components are described in the following subsections.

\subsection{Mode Clustering}
First, we consider the clustering of the trajectories that have been corrected in previous iterations.
In some applications, the possible solution modes are known in advance (for example, whether a vehicle passes to the left or right of an obstacle).  
In such cases, the modes can simply be specified manually and fixed throughout the learning process, allowing domain knowledge to be directly incorporated.  
When such prior knowledge is not available, MM-LMPC needs to identify modes automatically from historical trajectory data.  
Each stored closed-loop trajectory $(\mathbf{x}^i,\mathbf{u}^i)$ is mapped to a feature vector representation, and an unsupervised clustering method such as DBSCAN or a Gaussian Mixture Model (GMM) \cite{DBSCAN, GMM} can be applied to partition the trajectories into topologically distinct clusters.  
%This automated discovery enables the controller to adaptively recognize new modes as they emerge during iterative executions.
\begin{comment}
Before proceeding, we clarify the initialization requirement for MM-LMPC.  
While any discovered mode by definition comes with at least one generating trajectory, 
we assume that at the beginning of the learning process every mode to be considered is equipped with at least one feasible closed-loop trajectory. 
This ensures that mode-specific sampled safe sets and value functions are well-defined from the start.
\begin{assumption}[Given Initial Feasible Trajectories]
\label{assum:initial_trajectory}
For each mode $m$ present at initialization, at least one complete feasible closed-loop trajectory $(\mathbf{x}^i,\mathbf{u}^i)$ is provided, 
satisfying the system dynamics and all input and state constraints.
\end{assumption}
\end{comment}

\subsection{Mode-Specific LMPC}
For each mode $m \in \{1,\dots,N_j\}$, MM-LMPC instantiates a dedicated LMPC controller.  
Using only the trajectory data associated with mode $m$, we construct a \emph{mode-specific sampled safe set} $\mathcal{SS}_{j,m}$ and a \emph{mode-specific value function} $Q_{j,m}(\cdot)$. The definitions of $\mathcal{SS}_{j,m}$ and $Q_{j,m}(\cdot)$ follow (\ref{eq:SS}) and (\ref{eq:Q}), respectively.
At time $t$ of iteration $j$, the controller corresponding to the selected mode $m$ solves the following finite-horizon optimal control problem:
\begin{align}\label{eq:MPC2}
\min_{\{u_{k|t}\}} \quad & \sum_{k=t}^{t+N-1} h(x_{k|t}, u_{k|t}) + Q_{j-1,m}(x_{t+N|t}) \nonumber \\
\text{s.t.} \quad & x_{k+1|t} = f(x_{k|t}, u_{k|t}), \nonumber \\
& x_{k|t} \in \mathcal{X}, \quad u_{k|t} \in \mathcal{U}, \nonumber \\
& x_{t+N|t} \in \mathcal{SS}_{j-1,m}, \nonumber \\
& x_{t|t} = x_t^j.
\end{align}
The resulting control inputs are applied in the same receding-horizon manner as in the standard LMPC described in Section~\ref{sec:review}, with the mode $m$ fixed throughout the iteration. In the following discussion, we denote by $c_{m,k}$ the total cost of the closed-loop trajectory obtained when mode $m$ is executed for the $k$-th time.
 
%Each controller thus optimizes within its homotopy class, while preserving the recursive feasibility of LMPC.
\subsection{Meta-Controller for Mode Selection via Multi-Armed Bandits}

For each mode $m$, let $\mathcal{C}_m^{(j)} := \{c_{m,k} \mid k \le j,\, m_k = m\}$ denote 
the set of trajectory costs observed for mode $m$ up to iteration $j$,
and define $\hat{J}_{m}^{(j)} := \min(\mathcal{C}_m^{(j)})$ as the best cost observed for mode $m$ up to iteration $j$. 
Let $n_m(j) := |\mathcal{C}_m^{(j)}|$ be the number of executions of mode $m$ up to iteration $j$. 
The choice of which mode to execute at the beginning of iteration $j+1$ is posed as a 
\emph{multi-armed bandit} (MAB) problem \cite{bandit1, bandit2}, with each mode treated as an arm. 
Specifically, we adopt a \emph{Lower Confidence Bound} (LCB) rule \cite{UCB}:
\begin{equation}
    m^*_{j+1} = \arg\min_{m \in \{1,\dots,N_j\}}
    \Bigl(
        \hat{J}_{m}^{(j)} - 
        \kappa \sqrt{\frac{\log(j+1)}{\max\{1,n_m(j)\}}}
    \Bigr),
\end{equation}
where $\kappa > 0$ controls the exploration–exploitation trade-off. 
The first term $\hat{J}_{m}^{(j)}$ directs the controller toward modes that have demonstrated low costs in the past, capturing the exploitation aspect of the policy, while the second term provides an exploration bonus that prioritizes less-tested modes. 
Together, these terms allow MM-LMPC not only to refine well-performing modes but also to repeatedly explore alternative ones, thereby ensuring the discovery of globally competitive solutions in the long run.

\subsection{Summary of the Proposed Algorithm}
The proposed MM-LMPC algorithm is summarized in Algorithm~\ref{alg:mm-lmpc-final}. 
By Assumption~\ref{assum:initial_feasible}, the learning process starts with at least one successful trajectory, 
ensuring that the initialization phase of the algorithm is well-defined.

%Then, we move on to the detailed explanation of the procedures in the proposed method.
In the initialization phase (lines 2–5), the algorithm initializes each mode with an empty cost set $\mathcal{C}_m^{(0)}$, 
an empty sampled safe set $\mathcal{SS}_{0,m}$, and a counter $n_m(0)=0$.  
For every initial trajectory $(\mathbf{x}^i,\mathbf{u}^i) \in \mathcal{D}_0$, the trajectory is classified into a mode $m_i$ (line 4), 
after which the corresponding sets and counters are updated: the visited states are added to $\mathcal{SS}_{0,m_i}$, 
the counter $n_{m_i}(0)$ is incremented, and the trajectory cost is inserted into $\mathcal{C}_{m_i}^{(0)}$ (line 4).  
After all initial trajectories have been processed, the total number of initialized modes $N_0$ is determined (line 5) 
and the initial terminal cost for each mode $Q_{0,m}$ is defined based on $\mathcal{SS}_{0,m_i}$.

In the main loop (lines 8–16), repeated for $j = 1,\dots,J_{\max}$, three steps are executed.  
First, a mode $m_j$ is selected using the LCB rule (line 10), 
where the best observed cost for each mode is computed as $\hat{J}_m^{(j-1)} = \min(\mathcal{C}_m^{(j-1)})$.  
Second, the LMPC for the selected mode $m_j$ is executed and generates a closed-loop trajectory $(\mathbf{x}^j,\mathbf{u}^j)$ from $x_0$ (line 12).  
Third, the new trajectory is classified into a mode $m_{\mathrm{new}}$, 
its safe set $\mathcal{SS}_{j,m_{\mathrm{new}}}$ and terminal cost $Q_{j,m_{\mathrm{new}}}$ are updated, 
the counter $n_{m_{\mathrm{new}}}(j)$ is increased, 
and the observed cost is added to $\mathcal{C}_{m_{\mathrm{new}}}^{(j)}$ (line 14), 
while the safe sets of all other modes are carried over unchanged (lines 15–16).  

%By iterating this process, MM-LMPC progressively refines the controllers specialized for each mode 
%while ensuring systematic exploration of less-tested ones, thereby overcoming the limitations of standard LMPC.

\section{Theoretical Analysis}

In this section, we provide a theoretical analysis of the proposed MM-LMPC framework. 
We establish guarantees for recursive feasibility, stability, and convergence, and further analyze the regret associated with the bandit-based mode selection. 
Our analysis builds upon the foundational properties of LMPC \cite{iterative1} and extends them to our multi-modal, bandit-driven architecture. 
The analysis relies on the following assumptions.

\begin{assumption}[Finiteness of Modes]
\label{assum:finiteness}
Let $M_j$ denote the set of modes discovered up to iteration $j$. We assume that this set converges to a finite set $M_\infty$ as the number of iterations $j$ tends to infinity.
\end{assumption}

\begin{assumption}[Classifier Consistency]
\label{assum:classifier_consistency}
After a sufficient number of iterations, the trajectory classification becomes consistent. 
That is, for each mode $m$, there exists an iteration $J_c$ such that for all $j > J_c$, any new trajectory generated by executing mode $m$ will consistently be classified into mode $m$. 
%This ensures that the data within each mode’s safe set remains coherent.
\end{assumption}

\begin{assumption}[Intra-Mode Convergence Rate]
\label{assum:conv_rate}
For any mode $m \in M_\infty$, let $c_{m,k}$ be the cost of the trajectory generated the $k$-th time that mode $m$ is executed. 
We assume that the cost converges to its optimal value $c_m^*$, such that the sequence of cost improvements $\delta_{m,k} = c_{m,k} - c_m^*$ is summable. 
That is,
\begin{equation}
    \sum_{k=1}^{\infty} (c_{m,k} - c_m^*) \le C_m < \infty,
\end{equation}
where $C_m$ is a finite constant depending on the mode.
\end{assumption}

\noindent
Assumption~\ref{assum:finiteness} is natural in planning and control problems where the number of qualitatively distinct solution patterns is finite. 
Assumption~\ref{assum:classifier_consistency} is reasonable in practice, since many clustering and feature extraction methods exhibit stable behavior once sufficient data has been accumulated. 
This stability is crucial for ensuring that each mode’s safe set and value function are updated coherently.
Finally, Assumption~\ref{assum:conv_rate} strengthens the standard LMPC property that costs are non-increasing (see Lemma \ref{lem:intra_mode_nonincreasing} later) and bounded below by the nonnegative stage cost. 
The additional requirement that the improvement sequence be summable is not restrictive in practice, since it simply rules out pathological cases of arbitrarily slow convergence and ensures that the cumulative deviation from the optimal cost remains finite.

With these assumptions, we can establish the main theoretical properties of the MM-LMPC framework.

\begin{theo}[Recursive Feasibility and Stability]
\label{thm:feasibility}
Under Assumptions \ref{assum:system}--\ref{assum:classifier_consistency}, 
the MM-LMPC controller is recursively feasible for all iterations $j \geq 1$ and time steps $t \geq 0$. 
Furthermore, for each fixed iteration $j$, the closed-loop system is asymptotically stable. 
\end{theo}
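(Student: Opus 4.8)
The plan is to reduce the statement to the corresponding single-mode result of \cite{iterative1}, exploiting the fact that MM-LMPC with a fixed mode $m$ is exactly a standard LMPC instance operating on the mode-specific data $(\mathcal{SS}_{j-1,m}, Q_{j-1,m})$. First I would fix an iteration $j$ and let $m=m_j$ be the mode selected by the LCB rule in line~10. By Assumption~\ref{assum:initial_feasible} together with the initialization phase of Algorithm~\ref{alg:mm-lmpc-final}, every mode present at iteration $j$ has a nonempty sampled safe set $\mathcal{SS}_{j-1,m}$ containing at least one complete feasible closed-loop trajectory terminating at $x_F$; in particular $x_F \in \mathcal{SS}_{j-1,m}$ with $Q_{j-1,m}(x_F)=0$. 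This is the base case.

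For recursive feasibility I would argue inductively in the time index $t$ within iteration $j$, mirroring the shifted-trajectory argument of LMPC. Suppose problem~\eqref{eq:MPC2} is feasible at time $t$ with optimizer $\{u_{k|t}^{*}\}$ and predicted terminal state $x_{t+N|t}^{*}\in\mathcal{SS}_{j-1,m}$. Since $x_{t+N|t}^{*}$ lies on some stored trajectory of mode $m$, there is a stored input that steers it one step forward to a successor state still in $\mathcal{SS}_{j-1,m}$ (because $\mathcal{SS}_{j-1,m}$ is, by construction~\eqref{eq:SS}, closed under taking successors along stored trajectories, with $x_F$ absorbing). Appending this stored input to the tail of the shifted optimizer yields a feasible candidate at time $t+1$ from the realized state $x_{t+1}^{j}=f(x_t^j,u_{t|t}^{*})$, which also satisfies $\mathcal{X},\mathcal{U}$ by feasibility of the stored data and of the optimizer. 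Hence feasibility propagates for all $t\ge0$, and since $x_F\in\mathcal{SS}_{j-1,m}$ the task is completed in finite time, so the updates in line~14 are well-defined and the induction on $j$ closes (Assumptions~\ref{assum:finiteness}--\ref{assum:classifier_consistency} guarantee that $m_j$ indeed indexes an existing mode and that the newly generated trajectory is assigned coherently, so $\mathcal{SS}_{j,m}$ and $Q_{j,m}$ remain valid inputs for iteration $j+1$).

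For asymptotic stability at fixed $j$, I would use $Q_{j-1,m}(\cdot)$ evaluated along the LMPC-optimal predicted terminal state, or equivalently the optimal value function $J_t^{j}(\cdot)$ of~\eqref{eq:MPC2}, as a Lyapunov function for the closed loop of iteration $j$. Using the feasible shifted-plus-appended candidate constructed above as a suboptimal competitor at time $t+1$, one obtains the standard descent inequality
\begin{equation}
J_{t+1}^{j}(x_{t+1}^{j}) - J_{t}^{j}(x_{t}^{j}) \le -\,h(x_t^j,u_{t|t}^{*}),
\end{equation}
which is strictly negative whenever $x_t^j\neq x_F$ by Assumption~2, while $h(x_F,0)=0$ and $f(x_F,0)=x_F$ make $x_F$ an equilibrium; compactness of $\mathcal{X},\mathcal{U}$ and continuity of $f,h$ (Assumption~\ref{assum:system}) give the requisite bounds to invoke a standard Lyapunov argument and conclude convergence of $x_t^j\to x_F$ and stability. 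The only point requiring care — and the main obstacle — is verifying that the mode-specific safe set $\mathcal{SS}_{j-1,m}$ really is nonempty and successor-closed at every iteration: this rests on the initialization covering each initial mode (Assumption~\ref{assum:initial_feasible}) and on Assumption~\ref{assum:classifier_consistency} ensuring that a trajectory generated under mode $m$ is not misclassified into a different mode in a way that would leave $m$'s data stale; once this bookkeeping is established, every remaining step is a verbatim transcription of the LMPC arguments in Section~III of \cite{iterative1} applied mode-by-mode.
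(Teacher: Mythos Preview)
Your proposal is correct and follows essentially the same approach as the paper: both reduce the claim to the standard single-mode LMPC results of \cite{iterative1} by observing that, once the meta-controller fixes $m_j$, the closed loop is a verbatim LMPC instance on $(\mathcal{SS}_{j-1,m_j},Q_{j-1,m_j})$, and both establish recursive feasibility via the shifted-optimizer-plus-stored-successor candidate. The paper simply cites Theorem~1 of \cite{iterative1} for stability, whereas you spell out the Lyapunov descent inequality and the classifier-consistency bookkeeping explicitly, but these are elaborations of the same argument rather than a different route.
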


\begin{proof}
At the beginning of iteration $j$, the meta-controller selects a mode $m_j \in M_{j-1}$. 
For that iteration, the controller operates exactly as a standard LMPC with the corresponding sampled safe set $\mathcal{SS}_{j-1,m_j}$ and value function $Q_{j-1,m_j}$.  
Recursive feasibility can be established following the standard LMPC argument. 
At $t=0$, feasibility is guaranteed because $\mathcal{SS}_{j-1,m_j}$ contains at least one complete closed-loop trajectory from a previous execution, which can be used directly as a candidate solution of (\ref{eq:MPC2}). 
For $t>0$, let the optimal input sequence at time $t-1$ be 
$\{u_{k|t-1}^{*}\}_{k=t-1}^{t+N-2}$ with corresponding state sequence 
$\{x_{k|t-1}^{*}\}_{k=t-1}^{t+N-1}$. 
At time $t$, we can construct a feasible candidate by taking 
\[
\tilde u_{k|t} = u_{k|t-1}^{*}, \quad k=t,\dots,t+N-2,
\]
together with the terminal input $\tilde u_{t+N-1|t}$ that drives 
$x_{t+N-1|t-1}^{*}$ into some $x_{t+N|t}\in\mathcal{SS}_{j-1,m_j}$. 
The corresponding state sequence 
$\{\tilde x_{k|t}\}_{k=t}^{t+N}$ is feasible for (\ref{eq:MPC2}), 
since $\tilde x_{t+N|t}\in\mathcal{SS}_{j-1,m_j}$ by construction. 
Hence feasibility is preserved for all $t\ge0$.
  
Asymptotic stability of the closed-loop system also follows directly from Theorem~1 of \cite{iterative1}. 
\end{proof}

The next lemma establishes a key property of LMPC that will be repeatedly used in our analysis.
Under Assumption 5 (classifier consistency), once trajectory classification has stabilized (i.e., for all iterations $j > J_c$), the realized cost within any fixed mode does not increase across successive executions of that mode.
\begin{comment}
\begin{lemma}[Intra-Mode Non-Increasing Cost]
\label{lem:intra_mode_nonincreasing}
For any fixed mode $m \in M_\infty$, let $c_{m,k}$ denote the cost of the $k$-th trajectory generated by executing mode $m$. 
Then the sequence $\{c_{m,k}\}_{k=1}^\infty$ is non-increasing:
\begin{equation}
    c_{m,k+1} \leq c_{m,k}, \quad \forall k \geq 1.
\end{equation}
\end{lemma}

\begin{proof}
The claim of this lemme also follows directly from the standard LMPC result (Theorem~2 in \cite{iterative1}), 
which establishes that the realized cost does not increase across iterations. 
Applying this argument within each fixed mode yields the desired property.
\end{proof}
\end{comment}
\begin{lemma}[Intra-Mode Non-Increasing Cost]\label{lem:intra_mode_nonincreasing}
Suppose Assumptions~1–5 hold, and let $J_c$ be the iteration index guaranteed by Assumption~5 (Classifier Consistency). 
Fix any mode $m \in \mathcal{M}_\infty$ and consider the sequence of closed-loop iteration costs $\{c_{m,k}\}_{k\ge1}$ obtained by executing mode $m$ for the $k$-th time at iterations strictly after $J_c$. 
Then the sequence is non-increasing:
\begin{align}
c_{m,k+1} \le c_{m,k}, \qquad \forall k \ge 1.
\end{align}
\end{lemma}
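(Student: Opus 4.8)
The plan is to reduce the claim to the classical cost-monotonicity result for single-mode LMPC (Theorem~2 of \cite{iterative1}); the only new ingredient is verifying that, once the classifier has stabilized, the mode-$m$ controller really does behave like a standalone LMPC whose sampled safe set and value function grow monotonically with each execution of mode $m$, so that the original telescoping argument transfers verbatim.

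First I would fix $m \in \mathcal{M}_\infty$ and introduce notation for the subsequence of iterations at which mode $m$ is executed after $J_c$: write $j(1) < j(2) < \cdots$ for these iteration indices, so that $c_{m,k}$ is the realized closed-loop cost at iteration $j(k)$, produced by solving \eqref{eq:MPC2} with the pair $(\mathcal{SS}_{j(k)-1,m}, Q_{j(k)-1,m})$. The key structural step is the containment: the entire $k$-th mode-$m$ closed-loop trajectory, together with its cost-to-go data, is present in $(\mathcal{SS}_{j(k+1)-1,m}, Q_{j(k+1)-1,m})$. This follows from two facts about Algorithm~\ref{alg:mm-lmpc-final}: (i) by Assumption~\ref{assum:classifier_consistency}, for iterations after $J_c$ every trajectory generated by executing mode $m$ is classified back into mode $m$, so in the update step its states are appended to $\mathcal{SS}_{\cdot,m}$ and the cost-to-go entries are folded into $Q_{\cdot,m}$; and (ii) between two consecutive mode-$m$ executions only other modes act, and the algorithm carries $\mathcal{SS}_{\cdot,m}$ over unchanged (the ``$m \neq m_{new}$'' branch). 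Combined with the definition \eqref{eq:Q} of $Q$ as a minimum of cost-to-go over stored visits, this also yields the pointwise inequality $Q_{j(k+1)-1,m}(\cdot) \le Q_{j(k),m}(\cdot)$ on the relevant domain.

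Given this, the second step is the standard LMPC telescoping argument applied within mode $m$. Let $V_{k+1}(\cdot)$ denote the optimal value of \eqref{eq:MPC2} at iteration $j(k+1)$, which is a fixed function throughout that iteration. Recursive feasibility (Theorem~\ref{thm:feasibility}) together with the shift-and-append construction of a feasible candidate gives the descent inequality $V_{k+1}(x_{t+1}) \le V_{k+1}(x_t) - h(x_t, u_t)$ along the $(k+1)$-th closed-loop trajectory; summing over $t = 0, \dots, T$ telescopes to $c_{m,k+1} \le V_{k+1}(x_0)$, using $h(x_F,0)=0$ at the terminal state. To bound $V_{k+1}(x_0)$ I would exhibit the explicit candidate formed by the first $N$ states of the $k$-th mode-$m$ trajectory (padding with the equilibrium $x_F$ if $T < N$, which is licit since $f(x_F,0)=x_F$ and $h(x_F,0)=0$): its terminal state lies in $\mathcal{SS}_{j(k+1)-1,m}$ by the containment above, and applying $Q_{j(k+1)-1,m}(x_N^{(k)}) \le \sum_{r=N}^{T} h(x_r^{(k)}, u_r^{(k)})$ the candidate cost collapses to exactly $c_{m,k}$. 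Hence $c_{m,k+1} \le V_{k+1}(x_0) \le c_{m,k}$.

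The part that needs the most care is the bookkeeping in the first step: ensuring the $k$-th trajectory is genuinely present in the safe set and value function used for the $(k+1)$-th solve, despite the interleaving with executions of other modes and the re-indexing $j \mapsto k$. Once Assumption~\ref{assum:classifier_consistency} and the update rules of Algorithm~\ref{alg:mm-lmpc-final} are invoked correctly, the remainder is a direct transcription of the argument in Section~III of \cite{iterative1}, so no genuinely new analysis is required.
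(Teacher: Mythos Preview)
Your proposal is correct and follows essentially the same approach as the paper: invoke Assumption~\ref{assum:classifier_consistency} to guarantee that after $J_c$ each mode-$m$ trajectory coherently updates $\mathcal{SS}_{\cdot,m}$ and $Q_{\cdot,m}$, then apply the standard LMPC monotonicity argument of Theorem~2 in \cite{iterative1} per mode. The paper's own proof is a two-sentence sketch that cites Theorem~2 of \cite{iterative1} directly, whereas you spell out the subsequence indexing, the containment $\mathcal{SS}_{j(k),m}\subseteq\mathcal{SS}_{j(k+1)-1,m}$, and the telescoping descent inequality explicitly; this added bookkeeping is exactly the content the paper leaves implicit, and it is carried out correctly.
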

\begin{proof}
For $j>J_c$, Assumption~5 ensures that any trajectory generated while executing mode $m$ is consistently classified into the same mode $m$, so the mode-specific sampled safe set $SS_{j,m}$ and terminal cost $Q_{j,m}$ are updated coherently. 
Hence, the standard LMPC monotonicity argument (applied per mode) carries over verbatim: using the shifted optimal solution at time $t-1$ as a feasible candidate at time $t$ shows that the realized iteration cost within mode $m$ cannot increase from one execution to the next (see Theorem~2 in \cite{iterative1}). 
Therefore $c_{m,k+1}\le c_{m,k}$ for all $k\ge1$.
\end{proof}

Then, the following theorem establishes the asymptotic optimality of the proposed method.
\begin{theo}[Asymptotic Optimality]
\label{thm:optimality}
%Let the algorithm converge. Under Assumptions \ref{assum:system}-\ref{assum:classifier_consistency}, the cost of the converged trajectory, $J_{0 \to \infty}^\infty(x_S)$, is equal to the minimum of the optimal costs achievable within each of the discovered modes:
Under Assumptions \ref{assum:system}--\ref{assum:conv_rate} and \ref{assum:classifier_consistency}, 
the closed-loop cost of the converged trajectory $J_{0 \to \infty}^\infty(x_S)$ equals the minimum 
mode-wise optimal cost:
\begin{equation}
J_{0 \to \infty}^\infty(x_S) = \min_{m \in M_\infty} \left( J_{0 \to \infty, m}^* \right)
\end{equation}
where $J_{0 \to \infty, m}^*$ is the optimal cost achievable within the mode $m$.
\end{theo}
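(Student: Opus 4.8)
The plan is to prove the identity by establishing the two inequalities $J_{0\to\infty}^\infty(x_S)\ge\min_{m\in M_\infty}J^*_{0\to\infty,m}$ and $J_{0\to\infty}^\infty(x_S)\le\min_{m\in M_\infty}J^*_{0\to\infty,m}$ separately, with essentially all the work in the second one, which hinges on showing that the bandit layer keeps refining the globally best mode. For the lower bound, restrict attention to iterations $j>J_c$, where by Assumption~\ref{assum:classifier_consistency} the trajectory produced by executing mode $m_j$ is classified into $m_j$ and is therefore a feasible closed-loop trajectory of the mode-$m_j$ problem; then Lemma~\ref{lem:intra_mode_nonincreasing} and Assumption~\ref{assum:conv_rate} give, within that mode, $c_{m_j,\,n_{m_j}(j)}\ge c_{m_j}^\ast=J^*_{0\to\infty,m_j}\ge\min_{m\in M_\infty}J^*_{0\to\infty,m}$. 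Hence the best closed-loop cost discovered through iteration $j$, and therefore any limiting value, is bounded below by $\min_{m\in M_\infty}J^*_{0\to\infty,m}$.

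The crux is to show every mode is executed infinitely often. Working with $j>J_c$, Assumptions~\ref{assum:finiteness}--\ref{assum:classifier_consistency} fix the mode set at $M_\infty$ and make the counters $n_m(j)$ and best-cost values $\hat J_m^{(j)}$ evolve coherently. Suppose for contradiction that some mode $m_0$ is executed only finitely often, so $n_{m_0}(j)\equiv\bar n$ and $\hat J_{m_0}^{(j)}\equiv\hat J_{m_0}<\infty$ for all large $j$. Since $M_\infty$ is finite and the horizon infinite, some mode $m_1$ is executed infinitely often, and along the infinite subsequence of iterations at which $m_1$ is selected, $n_{m_1}(j)\to\infty$. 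At each such $j$ the LCB argmin rule together with $\hat J_{m_1}^{(j)}\ge 0$ yields
\[
\kappa\sqrt{\tfrac{\log(j+1)}{\bar n}}\;-\;\kappa\sqrt{\tfrac{\log(j+1)}{n_{m_1}(j)}}\;\le\;\hat J_{m_0}.
\]
For $j$ large enough along this subsequence $n_{m_1}(j)>\bar n$, so the left-hand side is positive and diverges to $+\infty$ (the factor $\sqrt{\log(j+1)}$ is unbounded while the parenthetical converges to $1/\sqrt{\bar n}>0$), contradicting the finite constant $\hat J_{m_0}$. Hence $n_m(j)\to\infty$ for every $m\in M_\infty$.

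For the upper bound, pick $m^\star\in\arg\min_{m\in M_\infty}J^*_{0\to\infty,m}$. Since $n_{m^\star}(j)\to\infty$, Lemma~\ref{lem:intra_mode_nonincreasing} (the cost sequence of mode $m^\star$ is non-increasing once classification has stabilized) and Assumption~\ref{assum:conv_rate} give $\hat J_{m^\star}^{(j)}=c_{m^\star,\,n_{m^\star}(j)}\downarrow c_{m^\star}^\ast=J^*_{0\to\infty,m^\star}=\min_{m\in M_\infty}J^*_{0\to\infty,m}$. Thus the best closed-loop cost found by MM-LMPC through iteration $j$ converges from above to $\min_{m\in M_\infty}J^*_{0\to\infty,m}$, and combining with the lower bound gives the stated equality.

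The main obstacle is the middle step: ruling out that the exploration bonus saturates before a rarely played mode is revisited. The argument exploits the asymmetry between the unbounded numerator $\log(j+1)$ of the bonus and the bounded exploitation term $\hat J_m^{(j)}$ (nonnegative by positivity of the stage cost, and non-increasing by Lemma~\ref{lem:intra_mode_nonincreasing}); everything else follows routinely once the intra-mode LMPC properties are in hand. A point worth making explicit in the write-up is the reading of $J_{0\to\infty}^\infty(x_S)$: because suboptimal modes are still selected infinitely often (only $O(\log j)$ times through iteration $j$, by the usual UCB-type counting argument that also underlies the regret bound), the per-iteration cost need not converge pointwise, so the quantity should be understood as the limit of the best-so-far closed-loop cost, which is exactly what the above establishes.
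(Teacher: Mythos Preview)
Your proof is correct and follows essentially the same approach as the paper's: both hinge on showing that every mode in $M_\infty$ is selected infinitely often via a contradiction argument exploiting the unbounded exploration bonus, and then invoke intra-mode monotonicity (Lemma~\ref{lem:intra_mode_nonincreasing}) together with Assumption~\ref{assum:conv_rate} to conclude. Your two-inequality framing and explicit remark about interpreting $J_{0\to\infty}^\infty(x_S)$ as the limit of the best-so-far cost (rather than the per-iteration cost, which does not converge since suboptimal modes are still played infinitely often) are in fact sharper than the paper's somewhat informal claim that the selection ``becomes asymptotically greedy,'' but the underlying argument is the same.
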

\begin{proof}
The proof proceeds by first establishing convergence within each mode. 
For any mode $m \in M_\infty$ that is selected infinitely often, the realized cost sequence 
(after trajectory classification has stabilized) is non-increasing and therefore converges to a well-defined limit. 
We denote this limit by $J_{0 \to \infty, m}^*$. 
This follows directly from Lemma~\ref{lem:intra_mode_nonincreasing}, which guarantees monotonicity of the iteration costs, 
together with the fact that costs are nonnegative and thus bounded below.

Next, we show that all discovered modes must indeed be selected infinitely often. 
Assumption~\ref{assum:finiteness} ensures that the set of modes $M_\infty$ is finite. 
Suppose, for the sake of contradiction, that some mode $m$ were selected only finitely many times. 
Then its counter $n_{m,j}$ would eventually remain constant, while the exploration bonus in the LCB policy 
continues to grow without bound as $j \to \infty$, eventually making mode $m$'s LCB strictly smaller 
than that of any other mode and forcing its reselection. 
This contradiction implies that every mode must be chosen infinitely often.

With these properties established, the remaining argument is straightforward. 
Since every mode is explored infinitely often, the exploration bonus in the LCB policy 
vanishes for all modes as $j \to \infty$, so the selection policy becomes asymptotically greedy 
and chooses the mode with the smallest empirically estimated cost $\hat{J}_{m,j}$. 
Because $\hat{J}_{m,j}$ converges to the true limit cost $J_{0 \to \infty, m}^*$ for each mode, the algorithm eventually selects the mode with the minimum cost among all discovered modes, 
which proves the theorem.

\end{proof}

While the preceding theorem guarantees asymptotic convergence, the following results analyze the finite-time behavior of the algorithm by characterizing both its single-step performance during exploration and its cumulative performance loss (regret).

\begin{theo}[Bound on Iteration Cost under LCB Selection]
\label{thm:cost_increase_bound}
Let $c_{best}^{(j-1)} = \min_{m \in M_{j-1}} \min(\mathcal{C}_m^{(j-1)})$ be the minimum iteration cost observed across all modes prior to iteration $j$. 
Suppose the LCB policy at iteration $j$ selects mode $m_j$, and let $c_j$ be the resulting iteration cost. Then
\begin{equation}
    c_j \;\le\; c_{best}^{(j-1)} 
    + \kappa \!\left( \sqrt{\tfrac{\log j}{\,n_{m_j}(j-1)\,}} 
    - \sqrt{\tfrac{\log j}{\,n_{m_{best}}(j-1)\,}} \right),
\end{equation}
where $m_{best} \in \arg\min_{m \in M_{j-1}} \min(\mathcal{C}_m^{(j-1)})$ and $n_{m}(j-1)$ is the number of times mode $m$ has been selected prior to iteration $j$.
\end{theo}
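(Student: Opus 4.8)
The plan is to obtain the bound directly from two ingredients: the optimality condition of the LCB selection rule at iteration $j$, and the intra-mode monotonicity of Lemma~\ref{lem:intra_mode_nonincreasing}. First I would write down what it means for the LCB policy to have selected $m_j$ rather than $m_{best}$: since $m_j$ minimizes the LCB objective over $M_{j-1}$, its objective value is no larger than that of $m_{best}$, i.e.
\begin{equation}
\hat{J}_{m_j}^{(j-1)} - \kappa\sqrt{\tfrac{\log j}{n_{m_j}(j-1)}} \;\le\; \hat{J}_{m_{best}}^{(j-1)} - \kappa\sqrt{\tfrac{\log j}{n_{m_{best}}(j-1)}}.
\end{equation}
Rearranging this inequality isolates $\hat{J}_{m_j}^{(j-1)}$ on the left and produces exactly the bonus difference appearing in the statement, while by definition $\hat{J}_{m_{best}}^{(j-1)} = \min(\mathcal{C}_{m_{best}}^{(j-1)}) = c_{best}^{(j-1)}$.

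The second step connects the \emph{newly realized} cost $c_j$ of executing mode $m_j$ with its \emph{best previously observed} cost $\hat{J}_{m_j}^{(j-1)}$. Here I would invoke Lemma~\ref{lem:intra_mode_nonincreasing}: once classification has stabilized (Assumption~\ref{assum:classifier_consistency}), the sequence of costs obtained each time mode $m_j$ is executed is non-increasing, so the cost of the $(n_{m_j}(j-1)+1)$-th execution satisfies $c_j \le c_{m_j,\,n_{m_j}(j-1)} = \min(\mathcal{C}_{m_j}^{(j-1)}) = \hat{J}_{m_j}^{(j-1)}$. Chaining this with the rearranged LCB inequality yields
\begin{equation}
c_j \;\le\; \hat{J}_{m_j}^{(j-1)} \;\le\; c_{best}^{(j-1)} + \kappa\!\left(\sqrt{\tfrac{\log j}{n_{m_j}(j-1)}} - \sqrt{\tfrac{\log j}{n_{m_{best}}(j-1)}}\right),
\end{equation}
which is the claimed bound.

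Since each step is a one-line manipulation, there is no serious technical obstacle; the subtleties are all bookkeeping. I would need to be careful that both $m_j$ and $m_{best}$ have been executed at least once, so that $n_{m_j}(j-1), n_{m_{best}}(j-1)\ge 1$ and the $\max\{1,\cdot\}$ guard in the LCB definition is inactive — this holds once every discovered mode has been seeded (guaranteed in the long run by the exploration argument in the proof of Theorem~\ref{thm:optimality}, and at initialization by the way $\mathcal{D}_0$ populates the modes). I would also dispatch the degenerate case $m_j = m_{best}$, in which the bonus difference vanishes and the bound reduces to $c_j \le c_{best}^{(j-1)}$, again immediate from Lemma~\ref{lem:intra_mode_nonincreasing}. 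Finally, the argument implicitly restricts attention to iterations $j > J_c$ so that Lemma~\ref{lem:intra_mode_nonincreasing} applies verbatim; I would state this caveat explicitly rather than leave it to the reader.
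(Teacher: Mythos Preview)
Your proposal is correct and matches the paper's proof essentially step for step: both combine the LCB optimality inequality for $m_j$ versus $m_{best}$ with the intra-mode monotonicity $c_j \le \hat{J}_{m_j}^{(j-1)}$ from Lemma~\ref{lem:intra_mode_nonincreasing}, merely in the opposite order. Your additional remarks on the $m_j=m_{best}$ degenerate case, the implicit requirement $n_m(j-1)\ge 1$, and the restriction to $j>J_c$ are caveats the paper leaves unstated, so keeping them is a net improvement.
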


\begin{proof}
Define the best observed cost for mode $m$ up to iteration $j-1$ by 
$\hat c_{m}^{(j-1)} := \min(\mathcal{C}_m^{(j-1)})$. 
By the intra-mode non-increasing property (Lemma~\ref{lem:intra_mode_nonincreasing}), the realized cost at iteration $j$ satisfies
\begin{equation}
    c_j \;\le\; \hat c_{m_j}^{(j-1)}. 
    \label{eq:bound_step1}
\end{equation}
Since $m_j$ is chosen by the LCB rule, its LCB score is no larger than that of the best-known mode $m_{best}$:
\begin{equation}
    \hat c_{m_j}^{(j-1)} - \kappa \sqrt{\tfrac{\log j}{\,n_{m_j}(j-1)\,}}
    \;\le\; 
    \hat c_{m_{best}}^{(j-1)} - \kappa \sqrt{\tfrac{\log j}{\,n_{m_{best}}(j-1)\,}}.
\end{equation}
Rearranging and using $\hat c_{m_{best}}^{(j-1)} = c_{best}^{(j-1)}$ yields
\begin{equation}
    \hat c_{m_j}^{(j-1)} \;\le\; c_{best}^{(j-1)} 
    + \kappa \!\left( \sqrt{\tfrac{\log j}{\,n_{m_j}(j-1)\,}} 
    - \sqrt{\tfrac{\log j}{\,n_{m_{best}}(j-1)\,}} \right).
    \label{eq:bound_step2}
\end{equation}
Combining \eqref{eq:bound_step1} and \eqref{eq:bound_step2} gives the stated inequality.
\end{proof}

\begin{theo}[Logarithmic Regret Bound]
\label{thm:regret}
Under Assumptions \ref{assum:system}--\ref{assum:conv_rate}, the cumulative regret $R_T$ of the MM-LMPC algorithm after $T$ iterations, defined as
\begin{align}
    R_T &= \sum_{j=1}^{T}(c_{j}-c^*),
\end{align}
satisfies the following bound:
\begin{align}
    R_T \;&\le\; \sum_{m:\, \Delta_m > 0} \left( \frac{4\kappa^2}{\Delta_m} \log T + C_0 \Delta_m \right) + \sum_{m \in M_{\infty}} C_m \\
    &= O(\log T). \notag
\end{align}
where $c_j$ is the realized cost at iteration $j$, $c^*=\min_{m\in M_\infty} c_m^*$ is the true optimal cost, $\Delta_m = c_m^* - c^* > 0$ is the suboptimality gap, and $C_0, C_m$ are constants independent of $T$.
\end{theo}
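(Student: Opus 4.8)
The plan is to decompose the cumulative regret into an \emph{inter-mode} component (cost of pulling suboptimal modes) and an \emph{intra-mode} component (cost of imperfect convergence within each mode), and bound each separately. Writing $c_j = (c_{m_j}^* - c^*) + (c_j - c_{m_j}^*)$ and summing over $j=1,\dots,T$, the second group of terms telescopes into $\sum_{m\in M_\infty}\sum_{k=1}^{n_m(T)}(c_{m,k}-c_m^*)$, which by Assumption~\ref{assum:conv_rate} (summable cost improvements) is bounded by $\sum_{m\in M_\infty} C_m$, a constant independent of $T$. The first group equals $\sum_{m:\Delta_m>0} n_m(T)\,\Delta_m$, so the whole argument reduces to bounding the number of pulls $n_m(T)$ of each suboptimal mode.

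The key step is the pull-count bound $n_m(T) \le \tfrac{4\kappa^2}{\Delta_m^2}\log T + C_0$. First I would establish the \emph{optimism} property: by Lemma~\ref{lem:intra_mode_nonincreasing} (intra-mode non-increasing cost), the best observed cost $\hat J_m^{(j)} = \min(\mathcal{C}_m^{(j)})$ is monotonically non-increasing in $j$ and is lower-bounded by the mode limit cost $c_m^*$, i.e. $\hat J_m^{(j)} \ge c_m^*$; simultaneously, because the optimal mode $m^*$'s estimate also satisfies $\hat J_{m^*}^{(j)} \ge c^* = c_{m^*}^*$ and (after stabilization) converges to $c^*$, its LCB score is at most $c^* + $ (vanishing bonus). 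Then, a suboptimal mode $m$ is selected at iteration $j$ only if its LCB score undercuts that of $m^*$, which forces
\begin{equation}
c_m^* - \kappa\sqrt{\frac{\log j}{n_m(j-1)}} \;\le\; \hat J_m^{(j-1)} - \kappa\sqrt{\frac{\log j}{n_m(j-1)}} \;\le\; \hat J_{m^*}^{(j-1)} \;\le\; c^* + o(1),
\end{equation}
so that (absorbing the $o(1)$ slack and the pre-stabilization iterations $J_c$ into the constant $C_0$) the exploration bonus must exceed the gap: $\kappa\sqrt{\log j / n_m(j-1)} \ge \Delta_m$. Solving this inequality for $n_m(j-1)$ gives $n_m(j-1) \le \kappa^2 \log j / \Delta_m^2 \le \kappa^2 \log T / \Delta_m^2$; a standard peeling/union-bound refinement over the times mode $m$ is pulled yields the factor-$4$ version stated. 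Multiplying by $\Delta_m$ and summing over suboptimal modes gives $\sum_{m:\Delta_m>0}\big(\tfrac{4\kappa^2}{\Delta_m}\log T + C_0\Delta_m\big)$, and adding the intra-mode constant completes the bound; since $M_\infty$ is finite (Assumption~\ref{assum:finiteness}), the sum has finitely many terms and $R_T = O(\log T)$.

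The main obstacle I anticipate is making the optimism step fully rigorous. Unlike a classical stochastic bandit, here the per-mode "reward" estimate $\hat J_m^{(j)}$ is not an empirical mean of i.i.d.\ draws but the running minimum of a deterministic, non-increasing LMPC cost sequence; there is no concentration inequality to invoke, so the usual Hoeffding-based argument must be replaced by the monotonicity-plus-lower-bound argument above. Two subtleties need care: (i) the estimate $\hat J_m^{(j)}$ converges to $c_m^*$ only asymptotically (Assumption~\ref{assum:conv_rate} bounds the \emph{cumulative} gap, not a rate), so the optimal mode's LCB is $c^* + o(1)$ rather than exactly $c^*$ — this residual must be shown to contribute only an $O(1)$ number of spurious suboptimal pulls, folded into $C_0$; and (ii) everything holds only for $j > J_c$ (Assumption~\ref{assum:classifier_consistency}), so the finitely many pre-stabilization iterations also contribute to $C_0$. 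I would handle both by fixing an arbitrarily small threshold, noting that beyond some finite iteration the optimal mode's estimate is within that threshold of $c^*$, and absorbing all earlier iterations — at most a constant number — into $C_0$.
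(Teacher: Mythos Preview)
Your proposal is correct and follows essentially the same route as the paper: the same inter/intra-mode regret decomposition, the same use of Assumption~\ref{assum:conv_rate} to bound the intra-mode term by $\sum_m C_m$, and the same LCB-comparison argument (using $\hat J_m^{(j)}\ge c_m^*$ from Lemma~\ref{lem:intra_mode_nonincreasing} and $\hat J_{m^*}^{(j)}\le c^*+\varepsilon$ eventually) to bound $n_m(T)$.

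One small correction on the source of the constant $4$: it does \emph{not} come from a ``peeling/union-bound refinement'' --- there is no randomness here, so no concentration or union bound enters at all. It comes precisely from the $\varepsilon$-threshold maneuver you describe in your last paragraph. The paper fixes $\varepsilon=\Delta_m/2$, obtains $\Delta_m-\varepsilon\le\kappa\sqrt{\log j/n_m(j-1)}$ for $j\ge J_\varepsilon$, and hence $n_m(T)\le \kappa^2/(\Delta_m-\varepsilon)^2\,\log T + C_0 = 4\kappa^2/\Delta_m^2\,\log T + C_0$. The $o(1)$ slack cannot be absorbed into $C_0$ while keeping the $\log T$ coefficient at $\kappa^2/\Delta_m^2$; the price of making the threshold argument quantitative is exactly the factor $4$.
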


\begin{proof}
The proof proceeds by decomposing the cumulative regret $R_T$ into two components: (A) the regret from suboptimal mode selection, and (B) the intra-mode cost gap, representing the temporary suboptimality incurred before convergence within each mode.
\begin{align}
R_T
&= \underbrace{\sum_{j=1}^{T} (c_{m_j}^* - c^*)}_{\text{(A) Suboptimal Selection Regret}}
+ \underbrace{\sum_{j=1}^{T} (c_{j} - c_{m_j}^*)}_{\text{(B) Intra-Mode Cost Gap}}. \label{eq:regret_decomposition}
\end{align}

For the intra-mode term (B), Assumption~\ref{assum:conv_rate} yields
\begin{align}
\sum_{j=1}^{T} (c_{j} - c_{m_j}^*)
&= \sum_{m \in M_\infty}\sum_{k=1}^{n_m(T)} (c_{m,k}-c_m^*) \notag \\
&\le \sum_{m \in M_\infty} C_m,
\end{align}
which is a $T$-independent finite constant.
For the suboptimal selection term (A), let $m^*$ be the optimal mode and fix any suboptimal mode $m$ with gap $\Delta_m:=c_m^*-c^*>0$. At iteration $j$, mode $m$ can be selected only if its empirical best cost, adjusted by the exploration bonus, is no larger than that of $m^*$. Since the empirical best cost of $m$ is at least $c_m^*$ by Lemma~\ref{lem:intra_mode_nonincreasing}, this condition implies
\begin{align}
c_m^*-\kappa \sqrt{\tfrac{\log j}{\,n_m(j-1)\,}}
\;\le\; \hat{c}_{m^*}^{j-1}-\kappa \sqrt{\tfrac{\log j}{\,n_{m^*}(j-1)\,}}. \label{eq:nec_raw}
\end{align}
where as defined in Therem \ref{thm:cost_increase_bound}, $\hat c_{m}^{(j-1)} := \min(\mathcal{C}_m^{(j-1)})$.
By the infinite-selection property of each mode (cf. the argument following Assumption~\ref{assum:finiteness}), the optimal mode $m^*$ is sampled infinitely often. Let $\{c_{m^*,k}\}_{k\ge1}$ denote the sequence of iteration costs when $m^*$ is executed for the $k$-th time. By Lemma~\ref{lem:intra_mode_nonincreasing} the sequence is non-increasing and, by Assumption~\ref{assum:conv_rate}, converges to $c^*$. Hence, for any $\varepsilon\in(0,\Delta_m)$ there exists $K_\varepsilon$ such that $c_{m^*,k}\le c^*+\varepsilon$ for all $k\ge K_\varepsilon$. Since $m^*$ is selected infinitely often, there exists $J_\varepsilon$ with $n_{m^*}(j-1)\ge K_\varepsilon$ for all $j\ge J_\varepsilon$, and therefore the empirical best cost satisfies $\hat c_{m^*}^{(j-1)}\le c^*+\varepsilon$ for all $j\ge J_\varepsilon$.
 Dropping the nonpositive exploration term of $m^*$ on the right of \eqref{eq:nec_raw} then gives
\begin{align}
\Delta_m-\varepsilon \;\le\; \kappa \sqrt{\tfrac{\log j}{\,n_m(j-1)\,}} \qquad (j\ge J_\varepsilon).
\label{eq:necessary_condition_eps}
\end{align}
Thus, a suboptimal mode $m$ can only be selected at sufficiently large $j$ if \eqref{eq:necessary_condition_eps} holds.

To convert \eqref{eq:necessary_condition_eps} into a counting bound, let $\tau_s$ denote the iteration index at which mode $m$ is selected for the $s$-th time; then $n_m(\tau_s-1)=s-1$. Applying \eqref{eq:necessary_condition_eps} at $j=\tau_s$ (for $\tau_s\ge J_\varepsilon$) gives
\begin{align}
s-1 \le \frac{\kappa^2}{(\Delta_m-\varepsilon)^2}\log \tau_s.
\end{align}
Therefore, for any horizon $T$, each $s$ with $\tau_s\le T$ satisfies
\begin{align}
s \le \frac{\kappa^2}{(\Delta_m-\varepsilon)^2}\log T + 1.
\end{align}
By definition, $n_m(T)=\max\{s:\tau_s\le T\}$, hence
\begin{align}
n_m(T)\;\le\;\frac{\kappa^2}{(\Delta_m-\varepsilon)^2}\,\log T + C_0. \label{eq:pull_bound_final}
\end{align}
This bound on the number of pulls holds for any $\varepsilon \in (0, \Delta_m)$. To obtain a concrete and tight bound, we can strategically choose a value for $\varepsilon$. A standard choice that balances the terms in the denominator is to set $\varepsilon = \Delta_m / 2$. Substituting this into \eqref{eq:pull_bound_final}, the denominator becomes $(\Delta_m - \Delta_m/2)^2 = (\Delta_m/2)^2 = \Delta_m^2/4$. This yields a simplified bound for $n_m(T)$:
\begin{align}
    n_m(T) \;\le\; \frac{\kappa^2}{\Delta_m^2/4}\,\log T + C_0 \;=\; \frac{4\kappa^2}{\Delta_m^2}\,\log T + C_0. \label{eq:pull_bound_clean}
\end{align}
Now, substituting this bound into the expression for term (A) yields:
\begin{align}
\sum_{m:\,c_m^*>c^*}& n_m(T)\,\Delta_m
\;\le\; \sum_{m:\,c_m^*>c^*}\left(\frac{4\kappa^2}{\Delta_m^2}\,\log T + C_0\right)\Delta_m \notag \\
&= \left( \sum_{m:\,c_m^*>c^*} \frac{4\kappa^2}{\Delta_m} \right) \log T + \sum_{m:\,c_m^*>c^*} C_0 \Delta_m \notag \\
&= O(\log T).
\end{align}
Combining the constant bound for (B) with this logarithmic bound for (A) gives the final result $R_T=O(\log T)$. This completes the proof.
\end{proof}

\section{Simulation Study}
To evaluate the effectiveness of the proposed method, we conduct a numerical experiment designed to highlight a key limitation of standard LMPC, and demonstrate how the proposed method overcomes it, which was also discussed in Section \ref{sec:review}. The simulation is implemented using the publicly available LMPC repository \cite{LMPCcode}, and the nonlinear MPC problems are solved numerically using CasADi \cite{casadi}.

\subsection{Experimental Setup}
We consider the minimum-time reach-avoid problem of the Dubins car with bounded acceleration, same as the original LMPC paper \cite{iterative1}:
\begin{subequations}\label{eq:Dubins}
    \begin{align}
    J_{0\rightarrow \infty}^*&(x_S)=\min_{\begin{smallmatrix} \theta_0, \theta_1,\ldots\\ a_0,a_1,\ldots \end{smallmatrix}} \sum\limits_{k=0}^{\infty} \mathds{1}_k \label{eq:Dubins0}\\
    \textrm{s.t.} \quad
    &x_{k+1} = \begin{bmatrix} z_{k+1} \\ y_{k+1} \\ v_{k+1} \end{bmatrix}= \begin{bmatrix} z_{k} \\ y_{k} \\ v_{k} \end{bmatrix} + \begin{bmatrix} v_k \cos(\theta_k)\\ v_k \sin(\theta_k)\\ a_k \end{bmatrix},\label{eq:Dubins1}\\
    &x_0=x_S = [0~0~0]^T,\label{eq:Dubins2}\\
    &-s \leq a_k \leq s, ~~\forall k\geq 0 \label{eq:Dubins3}\\
    & \frac{(z_k - z_{obs})^2}{a_e^2} + \frac{(y_k - y_{obs})^2}{b_e^2} \geq 1, ~~ \forall k\geq 0.\label{eq:Dubins5}
    \end{align}
\end{subequations}
Here, the stage cost $h(x_k, u_k)$ in \eqref{eq:Dubins0} is given by the indicator function $\mathds{1}_k$, which is defined as
\begin{equation}\label{eq:Indicator}
\mathds{1}_k = 
\begin{cases} 
1, & \text{if } x_k \neq x_F, \\
0, & \text{if } x_k = x_F,
\end{cases}
\end{equation}
where $x_F = [54, 0, 0]^T$ is the target state. In \eqref{eq:Dubins3}, the acceleration bound is set to $s=1$. The state vector $x_k = [z_k, y_k, v_k]^T$ contains the vehicle's position and velocity, while the control inputs are the heading angle $\theta_k$ and the acceleration $a_k$. 
An elliptical obstacle is placed at $(z_{obs}, y_{obs}) = (27, 6)$ with axes $a_e=16$ and $b_e=11$, 
creating two feasible paths: one passing above and one below the obstacle. 
We generate one initial trajectory for each path (costs 45 and 50, respectively) via brute-force search. 
Although the above path is initially shorter, the globally optimal solution is the below path (see Fig.~\ref{fig:example} or Fig.~\ref{acc_mm_exploration}).
For this setting, we execute the control with both the original LMPC algorithm \cite{iterative1} and the proposed method. The iteration number is set to 20 for both cases.

\subsection{Results}

\begin{figure}[t]
    \centering
    \includegraphics[width=0.95\linewidth]{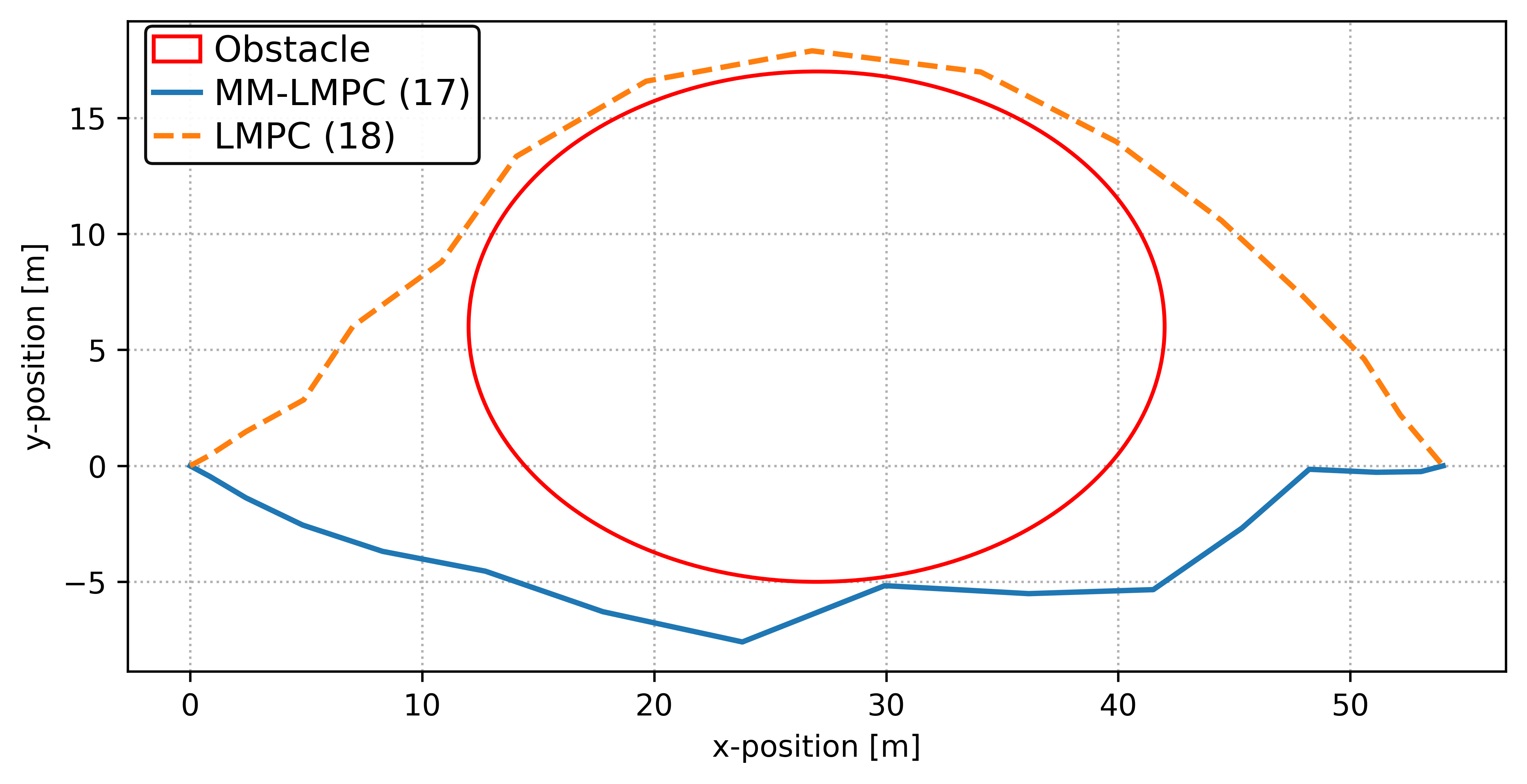}
    \caption{The trajectories obtained at the last iteration (blue: MM-LMPC, orange dashed: LMPC). Red ellipse: obstacle.}
    \label{acc_final_trajectories}
\end{figure}

\begin{figure}[t]
    \centering
    \includegraphics[width=0.95\linewidth]{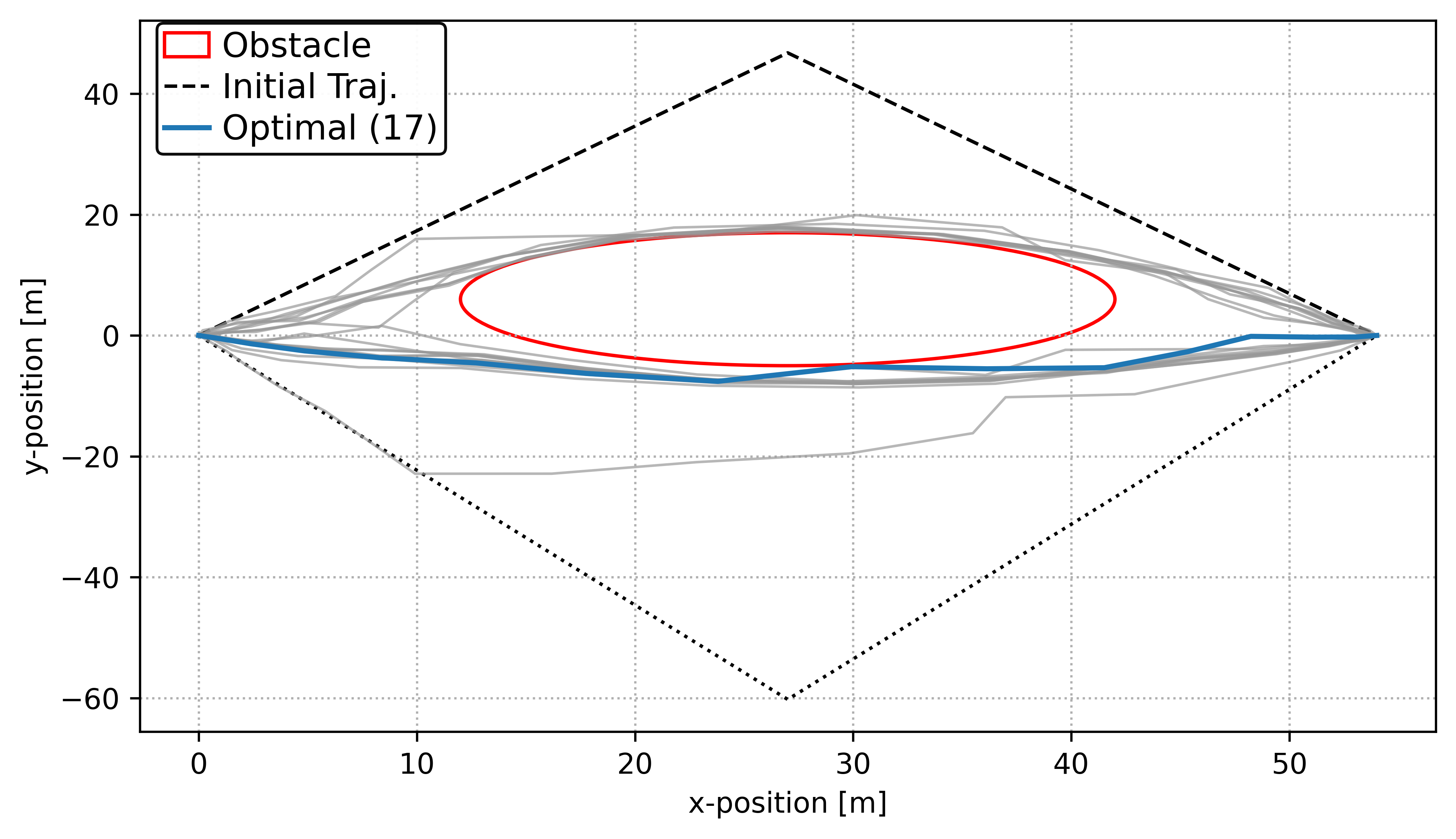}
    \caption{MM-LMPC trajectories during learning (gray), initial trajectories (black dashed/dotted), and the final best path (blue). Red ellipse: obstacle.}
    \label{acc_mm_exploration}
\end{figure}
\begin{comment}
\begin{figure}[t]
    \centering
    \includegraphics[width=\linewidth]{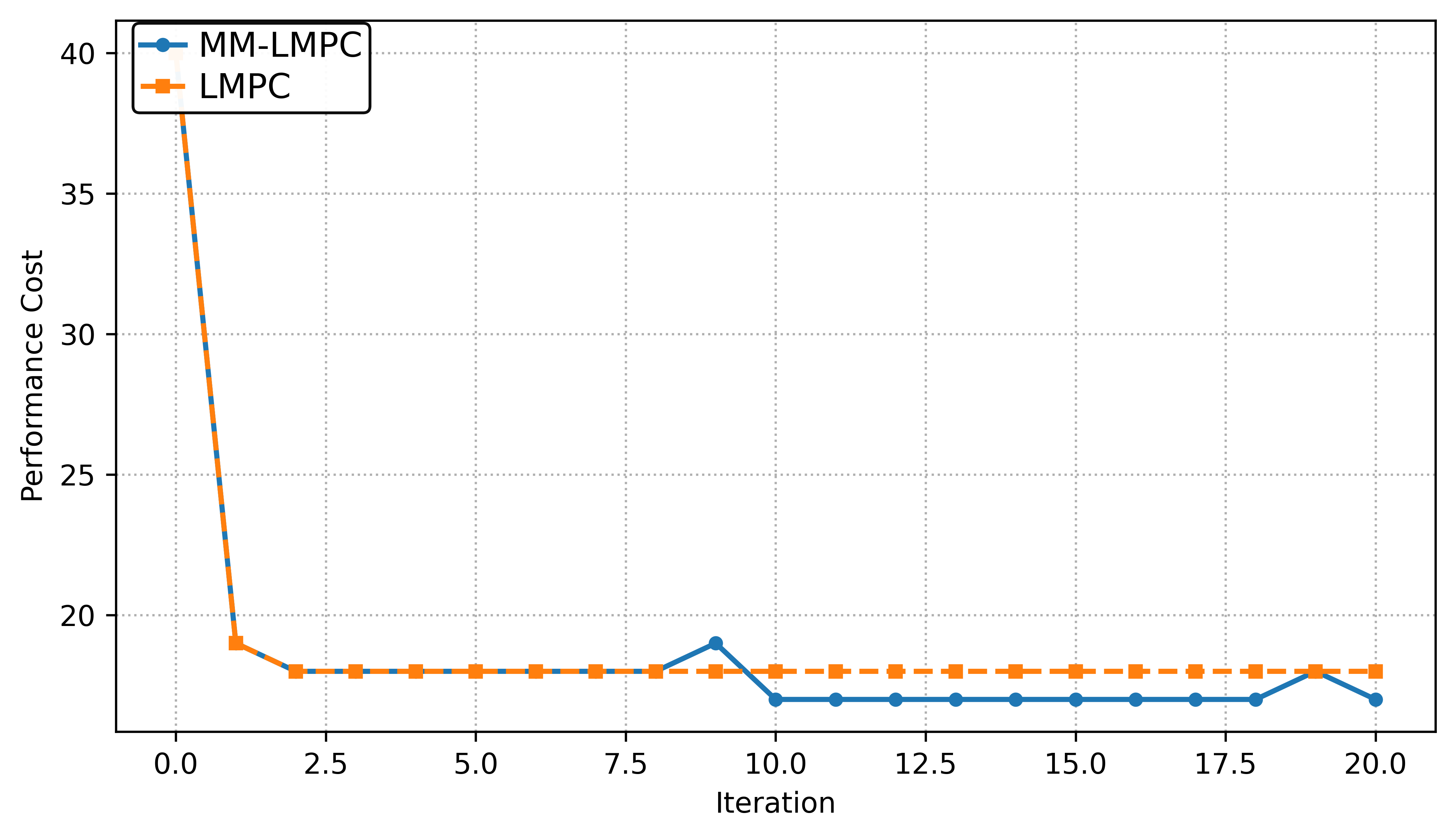}
    \caption{Performance cost per iteration. Blue: MM-LMPC; orange: LMPC. Iteration 0 shows the shared initial trajectory cost.}
    \label{acc_final_trajectories}
\end{figure}
\end{comment}

Standard LMPC, initialized with both trajectories, builds a single safe set $\mathcal{SS}^j$ by pooling states from all past data. 
Because the below-path trajectory initially has a larger cost-to-go, its states are not selected as terminal candidates, causing the controller to refine only the above path and converge to a high-cost local optimum (Fig.~\ref{fig:example}). 

In contrast, MM-LMPC classifies the initial trajectories into separate modes and maintains a controller for each. 
The LCB-based meta-controller continues to execute the below mode despite its initial suboptimality, gradually reducing its cost. 
Figure~\ref{acc_final_trajectories} compares the final iteration results: while the original LMPC converges to the suboptimal upper path with a final cost of 18, MM-LMPC successfully identifies and exploits the globally better lower path, achieving a lower final cost of 17. 
Moreover, Figure~\ref{acc_mm_exploration} illustrates all trajectories generated during learning, and we can observe that MM-LMPC systematically explores both candidate routes. %and progressively improves the lower path until it emerges as the best solution, demonstrating the ability of the proposed method to avoid entrapment in local optima and to achieve globally superior performance.

%Once the below-path cost falls below that of the above path, MM-LMPC switches to exploiting it and ultimately converges to the globally optimal solution with a cost of 17 steps, outperforming standard LMPC (18 steps).
 %This result validates our claims and demonstrates the efficacy of MM-LMPC in overcoming the limitations of standard LMPC in multi-modal environments.

\section{Conclusion}

In this paper, we proposed Multi-Modal Learning Model Predictive Control (MM-LMPC), a framework that mitigates the tendency of standard LMPC to converge to high-cost local optima by maintaining mode-specific controllers coordinated by a bandit-based meta-controller. We showed that MM-LMPC preserves the recursive feasibility and stability, while providing convergence within each mode and a logarithmic regret bound on its exploration process. A simulation study on a Dubins car problem demonstrated that, unlike standard LMPC, which remained confined to a single mode, MM-LMPC was able to improve multiple modes in parallel and achieve lower costs. %These results highlight its potential to enhance the robustness and applicability of learning-based control in complex iterative tasks. 

In future work, we plan to demonstrate the utility of the proposed framework in more challenging experiments involving a larger number of modes and richer task structures. We also aim to relax some of the simplifying assumptions adopted in the analysis, thereby extending the theoretical guarantees of MM-LMPC to broader settings.

\bibliographystyle{IEEEtran}

\bibliography{IEEE}

% Generated by IEEEtran.bst, version: 1.14 (2015/08/26)
\begin{thebibliography}{10}
\providecommand{\url}[1]{#1}
\csname url@samestyle\endcsname
\providecommand{\newblock}{\relax}
\providecommand{\bibinfo}[2]{#2}
\providecommand{\BIBentrySTDinterwordspacing}{\spaceskip=0pt\relax}
\providecommand{\BIBentryALTinterwordstretchfactor}{4}
\providecommand{\BIBentryALTinterwordspacing}{\spaceskip=\fontdimen2\font plus
\BIBentryALTinterwordstretchfactor\fontdimen3\font minus \fontdimen4\font\relax}
\providecommand{\BIBforeignlanguage}[2]{{%
\expandafter\ifx\csname l@#1\endcsname\relax
\typeout{** WARNING: IEEEtran.bst: No hyphenation pattern has been}%
\typeout{** loaded for the language `#1'. Using the pattern for}%
\typeout{** the default language instead.}%
\else
\language=\csname l@#1\endcsname
\fi
#2}}
\providecommand{\BIBdecl}{\relax}
\BIBdecl

\bibitem{MPC}
\BIBentryALTinterwordspacing
D.~Mayne, J.~Rawlings, C.~Rao, and P.~Scokaert, ``Constrained model predictive control: Stability and optimality,'' \emph{Automatica}, vol.~36, no.~6, pp. 789--814, 2000. [Online]. Available: \url{https://www.sciencedirect.com/science/article/pii/S0005109899002149}
\BIBentrySTDinterwordspacing

\bibitem{MPCengineering}
M.~Schwenzer, M.~Ay, T.~Bergs, and D.~Abel, ``Review on model predictive control: an engineering perspective,'' \emph{The International Journal of Advanced Manufacturing Technology}, vol. 117, no. 5--6, pp. 1327--1349, 2021.

\bibitem{MPCAGV1}
M.~Babu, R.~R. Theerthala, A.~K. Singh, B.~Baladhurgesh, B.~Gopalakrishnan, K.~M. Krishna, and S.~Medasani, ``Model predictive control for autonomous driving considering actuator dynamics,'' in \emph{2019 American Control Conference (ACC)}, 2019, pp. 1983--1989.

\bibitem{MPCLocomotion}
R.~Grandia, F.~Jenelten, S.~Yang, F.~Farshidian, and M.~Hutter, ``Perceptive locomotion through nonlinear model-predictive control,'' \emph{IEEE Transactions on Robotics}, vol.~39, no.~5, pp. 3402--3421, 2023.

\bibitem{iterative1}
U.~Rosolia and F.~Borrelli, ``Learning model predictive control for iterative tasks. a data-driven control framework,'' \emph{IEEE Transactions on Automatic Control}, vol.~63, no.~7, pp. 1883--1896, 2018.

\bibitem{ILC1}
D.~Bristow, M.~Tharayil, and A.~Alleyne, ``A survey of iterative learning control,'' \emph{IEEE Control Systems Magazine}, vol.~26, no.~3, pp. 96--114, 2006.

\bibitem{ILC2}
\BIBentryALTinterwordspacing
J.~H. Lee and K.~S. Lee, ``Iterative learning control applied to batch processes: An overview,'' \emph{Control Engineering Practice}, vol.~15, no.~10, pp. 1306--1318, 2007, special Issue - International Symposium on Advanced Control of Chemical Processes (ADCHEM). [Online]. Available: \url{https://www.sciencedirect.com/science/article/pii/S0967066106002279}
\BIBentrySTDinterwordspacing

\bibitem{ILMPC1}
\BIBentryALTinterwordspacing
G.~M. Bone, ``A novel iterative learning control formulation of generalized predictive control,'' \emph{Automatica}, vol.~31, no.~10, pp. 1483--1487, 1995. [Online]. Available: \url{https://www.sciencedirect.com/science/article/pii/000510989500051W}
\BIBentrySTDinterwordspacing

\bibitem{ILMPC2}
\BIBentryALTinterwordspacing
K.~S. Lee, I.-S. Chin, H.~J. Lee, and J.~H. Lee, ``Model predictive control technique combined with iterative learning for batch processes,'' \emph{AIChE Journal}, vol.~45, no.~10, pp. 2175--2187, 1999. [Online]. Available: \url{https://aiche.onlinelibrary.wiley.com/doi/abs/10.1002/aic.690451016}
\BIBentrySTDinterwordspacing

\bibitem{ILMPC3}
K.~Lee and J.~Lee, ``Convergence of constrained model-based predictive control for batch processes,'' \emph{IEEE Transactions on Automatic Control}, vol.~45, no.~10, pp. 1928--1932, 2000.

\bibitem{ILMPC4}
Y.~Zhou, X.~Tang, D.~Li, X.~Lai, and F.~Gao, ``Combined iterative learning and model predictive control scheme for nonlinear systems,'' \emph{IEEE Transactions on Systems, Man, and Cybernetics: Systems}, vol.~54, no.~6, pp. 3558--3567, 2024.

\bibitem{ILMPC5}
D.~Li, S.~He, Y.~Xi, T.~Liu, F.~Gao, Y.~Wang, and J.~Lu, ``Synthesis of ilc–mpc controller with data-driven approach for constrained batch processes,'' \emph{IEEE Transactions on Industrial Electronics}, vol.~67, no.~4, pp. 3116--3125, 2020.

\bibitem{ILMPC6}
X.~Liu, L.~Ma, X.~Kong, and K.~Y. Lee, ``Robust model predictive iterative learning control for iteration-varying-reference batch processes,'' \emph{IEEE Transactions on Systems, Man, and Cybernetics: Systems}, vol.~51, no.~7, pp. 4238--4250, 2021.

\bibitem{iterative2}
\BIBentryALTinterwordspacing
U.~Rosolia and F.~Borrelli, ``Learning model predictive control for iterative tasks: A computationally efficient approach for linear system,'' \emph{IFAC-PapersOnLine}, vol.~50, no.~1, pp. 3142--3147, 2017, 20th IFAC World Congress. [Online]. Available: \url{https://www.sciencedirect.com/science/article/pii/S2405896317306523}
\BIBentrySTDinterwordspacing

\bibitem{iterative3}
U.~Rosolia, X.~Zhang, and F.~Borrelli, ``Robust learning model predictive control for iterative tasks: Learning from experience,'' in \emph{2017 IEEE 56th Annual Conference on Decision and Control (CDC)}, 2017, pp. 1157--1162.

\bibitem{iterative4}
B.~Thananjeyan, A.~Balakrishna, U.~Rosolia, J.~E. Gonzalez, A.~Ames, and K.~Goldberg, ``Abc-lmpc: Safe sample-based learning mpc for stochastic nonlinear dynamical systems with adjustable boundary conditions,'' in \emph{Algorithmic Foundations of Robotics XIV}, S.~M. LaValle, M.~Lin, T.~Ojala, D.~Shell, and J.~Yu, Eds.\hskip 1em plus 0.5em minus 0.4em\relax Cham: Springer International Publishing, 2021, pp. 1--17.

\bibitem{self1}
W.~Hashimoto, K.~Hashimoto, Y.~Onoue, and S.~Takai, ``Learning-based iterative optimal control for unknown systems using gaussian process regression,'' in \emph{2022 European Control Conference (ECC)}, 2022, pp. 1554--1559.

\bibitem{self2}
\BIBentryALTinterwordspacing
W.~Hashimoto, K.~Hashimoto, M.~Kishida, and S.~Takai, ``Robust learning-based iterative model predictive control for unknown non-linear systems,'' \emph{IET Control Theory \& Applications}, vol. n/a, no. n/a. [Online]. Available: \url{https://ietresearch.onlinelibrary.wiley.com/doi/abs/10.1049/cth2.12764}
\BIBentrySTDinterwordspacing

\bibitem{RLMPC}
\BIBentryALTinterwordspacing
H.~Petrenz, J.~Köhler, and F.~Borrelli, ``Robust mpc for uncertain linear systems -- combining model adaptation and iterative learning,'' 2025. [Online]. Available: \url{https://arxiv.org/abs/2504.11261}
\BIBentrySTDinterwordspacing

\bibitem{iterative5}
Y.~R. Stürz, E.~L. Zhu, U.~Rosolia, K.~H. Johansson, and F.~Borrelli, ``Distributed learning model predictive control for linear systems,'' in \emph{2020 59th IEEE Conference on Decision and Control (CDC)}, 2020, pp. 4366--4373.

\bibitem{NNLMPC}
\BIBentryALTinterwordspacing
W.~Hashimoto, K.~Hashimoto, M.~Kishida, and S.~Takai, ``Reference-free iterative learning model predictive control with neural certificates,'' 2025. [Online]. Available: \url{https://arxiv.org/abs/2507.14025}
\BIBentrySTDinterwordspacing

\bibitem{AV}
U.~Rosolia and F.~Borrelli, ``Learning how to autonomously race a car: A predictive control approach,'' \emph{IEEE Transactions on Control Systems Technology}, vol.~28, no.~6, pp. 2713--2719, 2020.

\bibitem{surgical}
B.~Thananjeyan, A.~Balakrishna, U.~Rosolia, F.~Li, R.~McAllister, J.~E. Gonzalez, S.~Levine, F.~Borrelli, and K.~Goldberg, ``Safety augmented value estimation from demonstrations (saved): Safe deep model-based rl for sparse cost robotic tasks,'' \emph{IEEE Robotics and Automation Letters}, vol.~5, no.~2, pp. 3612--3619, 2020.

\bibitem{TDMPC1}
C.~Vallon and F.~Borrelli, ``Task decomposition for iterative learning model predictive control,'' in \emph{2020 American Control Conference (ACC)}, 2020, pp. 2024--2029.

\bibitem{TDMPC2}
\BIBentryALTinterwordspacing
------, ``Task decomposition for mpc: A computationally efficient approach for linear time-varying systems⁎⁎this research was sustained in part by fellowship support from the national physical science consortium and the national institute of standards and technology.'' \emph{IFAC-PapersOnLine}, vol.~53, no.~2, pp. 4240--4245, 2020, 21st IFAC World Congress. [Online]. Available: \url{https://www.sciencedirect.com/science/article/pii/S2405896320333267}
\BIBentrySTDinterwordspacing

\bibitem{MMLMPC}
\BIBentryALTinterwordspacing
F.~B. Kopp and F.~Borrelli, ``Data-driven multi-modal learning model predictive control,'' 2024. [Online]. Available: \url{https://arxiv.org/abs/2407.06313}
\BIBentrySTDinterwordspacing

\bibitem{DBSCAN}
A.~A. Bushra and G.~Yi, ``Comparative analysis review of pioneering dbscan and successive density-based clustering algorithms,'' \emph{IEEE Access}, vol.~9, pp. 87\,918--87\,935, 2021.

\bibitem{GMM}
\BIBentryALTinterwordspacing
D.~Reynolds, \emph{Gaussian Mixture Models}.\hskip 1em plus 0.5em minus 0.4em\relax Boston, MA: Springer US, 2009, pp. 659--663. [Online]. Available: \url{https://doi.org/10.1007/978-0-387-73003-5_196}
\BIBentrySTDinterwordspacing

\bibitem{bandit1}
T.~Lattimore and C.~Szepesvári, \emph{Bandit Algorithms}.\hskip 1em plus 0.5em minus 0.4em\relax Cambridge University Press, 2020.

\bibitem{bandit2}
\BIBentryALTinterwordspacing
S.~Agrawal and N.~Goyal, ``Analysis of thompson sampling for the multi-armed bandit problem,'' in \emph{Proceedings of the 25th Annual Conference on Learning Theory}, ser. Proceedings of Machine Learning Research, S.~Mannor, N.~Srebro, and R.~C. Williamson, Eds., vol.~23.\hskip 1em plus 0.5em minus 0.4em\relax Edinburgh, Scotland: PMLR, 25--27 Jun 2012, pp. 39.1--39.26. [Online]. Available: \url{https://proceedings.mlr.press/v23/agrawal12.html}
\BIBentrySTDinterwordspacing

\bibitem{UCB}
\BIBentryALTinterwordspacing
P.~Auer, N.~Cesa-Bianchi, and P.~Fischer, ``Finite-time analysis of the multiarmed bandit problem,'' \emph{Mach. Learn.}, vol.~47, no. 2-3, pp. 235--256, 2002. [Online]. Available: \url{http://dblp.uni-trier.de/db/journals/ml/ml47.html#AuerCF02}
\BIBentrySTDinterwordspacing

\bibitem{LMPCcode}
``{LMPC code},'' \url{https://github.com/urosolia/LMPC}.

\bibitem{casadi}
``{CasADi},'' \url{https://web.casadi.org/}.

\end{thebibliography}

\end{document}